\tikzset{shadows=no} 
\newtheorem{theorem}{Theorem}[section]
\newtheorem{lemma}[theorem]{Lemma}
\newtheorem{corollary}[theorem]{Corollary}
\newtheorem{proposition}[theorem]{Proposition}
\theoremstyle{definition}
\newtheorem{definition}[theorem]{Definition}
\theoremstyle{remark}
\newtheorem{remark}[theorem]{Remark}
\newcommand{\F}{\mathbb F}
\newcommand{\Z}{\mathbb Z}
\DeclareMathOperator{\DDT}{DDT}
\DeclareMathOperator{\Aut}{Aut}
\DeclareMathOperator{\Gal}{Gal}
\DeclareMathOperator{\charac}{char}
\begin{document}
\title{Differential biases, $c$-differential uniformity, and their relation to differential attacks%\thanks{Supported by organization x.}
}

\author{Daniele Bartoli, 
        Lukas K\"olsch, 
        and~Giacomo Micheli% <-this % stops a space
\thanks{D. Bartoli is with University of Perugia. Email: daniele.bartoli@unipg.it}% <-this % stops a space
\thanks{L. K\"olsch and G. Micheli are with University of South Florida and the Center for Cryptographic Research at USF. Email: \{koelsch,gmicheli\}@usf.edu}}

\maketitle              % typeset the header of the contribution
\begin{abstract}
Differential cryptanalysis famously uses statistical biases in the propagation of differences in a block cipher to attack the cipher. In this paper, we investigate the existence of more general statistical biases in the differences. 
To this end, we discuss the $c$-differential uniformity of S-boxes, which is a concept that was recently introduced in Ellingsen et. al. [IEEE Transactions on Information Theory, vol. 66, no. 9 (2020)] to measure certain statistical biases that could potentially be used in attacks similar to differential attacks. Firstly, we prove that a large class of potential candidates for S-boxes necessarily has large $c$-differential uniformity for all but at most $B$ choices of $c$, where $B$ is a constant independent of the size of the finite field $q$. This result implies that for a large class of functions, certain statistical differential biases are inevitable. 

In a second part, we discuss the practical possibility of designing a differential attack based on weaknesses of S-boxes related to their $c$-differential uniformity.

\end{abstract}

\begin{IEEEkeywords}
Differential attack, Substitution-Permutation Network, $c$-Differential Uniformity.
\end{IEEEkeywords}

\section{Introduction}

\subsection{Background on block ciphers and differential cryptanalysis}

\paragraph{Block ciphers}
A block cipher is a symmetric encryption scheme that transforms an $n$-bit plaintext into an $n$-bit ciphertext using a secret key. Block ciphers (like AES) constitute the majority of all symmetric ciphers in use today, and are a staple of modern cryptography. A classical construction of block ciphers is iterative, meaning that the entire cipher is a sequence of (generally simple and almost identical) round functions, see Figure~\ref{fig:iterated} for a schematic illustration. A standard choice for a round function is a \emph{Substitution-Permutation Network (SPN)}, which consisting of a \emph{linear layer}, sometimes called the permutation part and a substitution layer (usually several so called S-boxes running in parallel), with a key addition in between the rounds, see Figure~\ref{fig:spn} for a conceptual visualization. 
\setlength{\belowcaptionskip}{-10pt}
\begin{figure} [b]
\centering
\begin{tikzpicture}[scale=.8]

  \tikzstyle{every node}=[transform shape];
  \tikzstyle{every node}=[node distance=1.2cm];
  
    \node (XOR-1)[XOR,scale=1.2] {};
    \node [left of=XOR-1] (p) {$m$};
	\node (f-1) [right of=XOR-1,draw,rectangle,thick] {$f$};
	\path[line] (p) edge (XOR-1);
	\path[line] (XOR-1) edge (f-1);
    
    \node (XOR-2)[right of=f-1,XOR,scale=1.2] {};
	\path[line] (f-1) edge node[below] {} (XOR-2);
	\node (f-2) [right of=XOR-2] {};
	\path[line] (XOR-2) edge (f-2) ;
	\node (dots)[right of=XOR-2,node distance=1.5cm] {$\dots$};
	
    \node (XOR-3)[right of=dots,XOR,scale=1.2,node distance=1.5cm] {};
	\path[line] (dots) edge (XOR-3) ;
	\node (f-3) [right of=XOR-3,draw,rectangle,thick] {$f$};
    	\path[line] (XOR-3) edge (f-3);

    \node (XOR-4)[right of=f-3,XOR,scale=1.2] {};
	\path[line] (f-3) edge node[below] {} (XOR-4);
	\node [right of=XOR-4] (c) {$c$};
	\path[line] (XOR-4) edge (c);

	%% Subkeys
	\node (k-0) [above=1.3cm of XOR-1] {};
	\path[line] (k-0) edge node[right] {\small $k_{0}$} (XOR-1);

	\node (k-1) [above=1.3cm of XOR-2] {};
	\path[line] (k-1) edge node[right] {\small $k_{1}$} (XOR-2);

	\node (k-2) [above=1.3cm of XOR-3] {};
	\path[line] (k-2) edge node[right] {\small $k_{r-1}$} (XOR-3);

	\node (k-3) [above=1.3cm of XOR-4] {};
	\path[line] (k-3) edge node[right] {\small $k_{r}$} (XOR-4);

\end{tikzpicture}
\caption{An iterated (key-alternating) block cipher with $r$ rounds and subkeys~$k_i$ that encrypts a plaintext $m$ into a ciphertext $c$}
\label{fig:iterated}
\end{figure}
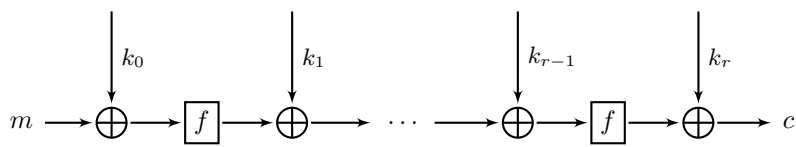

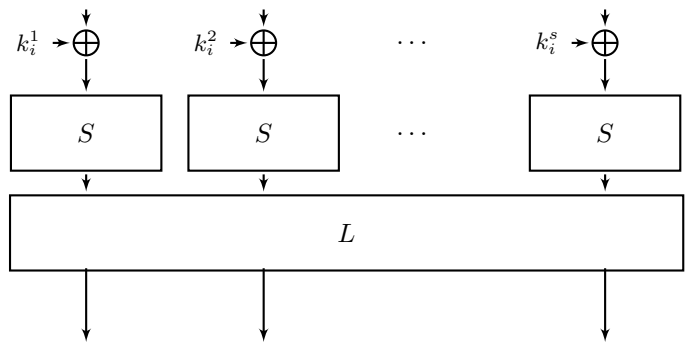
\begin{figure} 
\centering
\begin{tikzpicture}[scale=.8]

  \tikzstyle{every node}=[transform shape];
  \tikzstyle{every node}=[node distance=1.2cm];

    \node (XOR-1)[XOR,scale=1] {};
	\node (S1) [below of=XOR-1,draw,rectangle,thick,minimum width=2cm,minimum height=1cm] {$S$};
	    \node (XOR-2)[XOR,right=2cm of XOR-1,scale=1] {};
	\node (S2) [below of=XOR-2,draw,rectangle,thick,minimum width=2cm,minimum height=1cm] {$S$};
	\path[line] (XOR-2) edge (S2);
	
	\path[line] (XOR-1) edge (S1);
	\node (L) [below right=0.3cm and -4.4cm of S2,draw,rectangle,thick,minimum width=8.95cm,minimum height=1cm] {$L$};
	
		\node (dots)[right of=XOR-2,node distance=2cm] {$\dots$};
		\node (dots2)[right of=S2,node distance=2cm] {$\dots$};
	
	    \node (XOR-3)[XOR,right=2cm of dots,scale=1] {};
	\node (S3) [below of=XOR-3,draw,rectangle,thick,minimum width=2cm,minimum height=1cm] {$S$};
	\path[line] (XOR-3) edge (S3);
	
		\node (l-0) [below=0.3cm of S1] {};
	\path[line] (S1) edge node[below] {} (l-0);

	\node (l-1) [below=0.3cm of S2] {};
	\path[line] (S2) edge node[below] {} (l-1);

	\node (l-2) [below=0.3cm of S3] {};
	\path[line] (S3) edge node[below] {} (l-2);
	
		\node (o-0) [below=1cm of S1] {};
				\node (o2-0) [below=2.3cm of S1] {};
	\path[line] (o-0) edge node[below] {} (o2-0);

	\node (o-1) [below=1cm of S2] {};
					\node (o2-1) [below=2.3cm of S2] {};
	\path[line] (o-1) edge node[below] {} (o2-1);

	\node (o-2) [below=1cm of S3] {};
					\node (o2-2) [below=2.3cm of S3] {};
	\path[line] (o-2) edge node[below] {} (o2-2);
	
	%% Subkeys
	\node (m-0) [above=0.3cm of XOR-1] {};
	\path[line] (m-0) edge node[below] { } (XOR-1);

	\node (m-1) [above=0.3cm of XOR-2] {};
	\path[line] (m-1) edge node[below] {} (XOR-2);

	\node (m-2) [above=0.3cm of XOR-3] {};
	\path[line] (m-2) edge node[below] {} (XOR-3);

	\node (k-0) [left=0.3cm of XOR-1] {\small $k_{i}^1$};
	\path[line] (k-0) edge node[right] {} (XOR-1);

	\node (k-1) [left=0.3cm of XOR-2] {\small $k_{i}^2$};
	\path[line] (k-1) edge node[right] {} (XOR-2);

	\node (k-2) [left=0.3cm of XOR-3] {\small $k_{i}^s$};
	\path[line] (k-2) edge node[right] {} (XOR-3);
	
\end{tikzpicture}
\caption{A high-level view of one round of an SPN with an S-box $S$, linear layer $L$ and round key $k_i$}
\label{fig:spn}
\end{figure}

\paragraph{Differential cryptanalysis}
One of the most important attacks against block ciphers is the so-called differential attack introduced in~\cite{diffattack}, which also serves as a basis for more advanced attacks like boomerang attacks~\cite{boomattack}, rectangle attacks~\cite{rectattack}, or differential-linear attacks~\cite{dlattacks}.
  
Let us briefly recall the main idea behind the classical differential attack. The attack uses that differences of plain texts propagate with different probabilities through the encryption process. 

A cipher vulnerable to a differential attack has a (strongly) non-uniform distribution of differences that can then be exploited by a differential attack. The concept of \emph{difference} used here is usually the XOR (i.e. addition in $\F_{2}^n$). The main reason for this is that the most standard design for SPNs uses \emph{key additions}, i.e. the key is simply added to the message in between all rounds. 

Clearly, we have $(x+\Delta+k)-(x+k)=(x+\Delta)-x$, so the key addition process does not impact the propagation of differences at all. Further, the differences are also not impacted by the linear layer of SPNs, so the only part that has direct influence is the S-box layer, which makes both analysis and design considerably easier. For a function $F\colon \F_{2}^n \rightarrow \F_2^n$, the differential attack thus considers the following distribution of probabilities, where $\Delta_I$, $\Delta_O$ are the differences of plain and cipher texts, respectively:
\begin{equation}
P_{\Delta_I,\Delta_O}=P_x(F(x+\Delta_I)=\Delta_O+F(x)).
\label{eq:prob}
\end{equation}

The main idea behind the differential attack is then that, because the key does not impact the propagation of differences, differences can be broken down over all rounds, so \emph{differential trails} $(\Delta_0,\dots,\Delta_r)$ can be constructed which capture the probability that an input difference $\Delta_0$ propagates through an $r$-round block cipher to an output difference $\Delta_r$ via intermediate differences $\Delta_i$ after the $i$-th round. With the tacit assumption of uniformity, the differential analysis of the cipher can thus be broken down to analysing its constitutive rounds, which, in the case of an successful attack, allows the attacker to guess the key given enough plaintext - ciphertext pairs. To combat these differential attacks, the probabilities in Eq.~\eqref{eq:prob} should be as uniform as possible.

\paragraph{Differential uniformities of S-boxes} As described above, the resistance of an SPN relies on its non-linear part, i.e. the choice of the S-box. The behaviour of an S-box $F$ with regard to differential attacks is measured by its \emph{Difference Distribution Table (DDT)} and \emph{differential uniformity} $\delta(F)$ which are defined as

\begin{align}
\DDT_F[a,b]&=\#\{x\in \F_{2^n} \colon F(x+a)+F(x)=b\}, \label{eq:ddt}  \\
	\delta_F&=\max_{a \in {\F_2^n}^*, b \in \F_2^n}\DDT_F[a,b].\label{eq:diff_uni}
\end{align}
The lower the differential uniformity of $F$, the better is its resistance against a differential attack.\footnote{Of course, the choice of the linear layer is also important for the resistance against a differential attack to maximize the number of active S-boxes.} The best possible differential uniformity for S-boxes over binary finite fields is 2, those S-boxes are called almost perfect nonlinear (APN).

\paragraph{Alternative differences and $c$-differential uniformities}
The basic idea of the differential attack can be transferred to another group operation that substitutes the role of the XOR/addition. For example, \emph{multiplicative differentials} were considered in~\cite{mult}. Instead of investigating the propagation on pairs $(x,x+\Delta)$, this kind of differential attack considers pairs $(x,\alpha \cdot x)$, where the multiplication is the multiplication in the ring $\Z_n$. These kind of attacks were motivated by a number of ciphers that did not just use a simple key addition, but instead relied on modular multiplication as a primitive operation. The authors of~\cite{cdiff} use this motivation to consider a notion that tracks some statistical behaviour of the propagation of differences and is clearly inspired by the definition of the differential uniformity in~\eqref{eq:diff_uni}. The definition uses a more general setting (not restricting to characteristic $2$), and uses the well known isomorphism (as vector spaces) of $\F_p^n$ and $\F_{p^n}$.
\begin{definition}\cite[Definition 1]{cdiff}
	Let $F:\F_{p^n}\rightarrow \F_{p^n}$, and $c\in\F_{p^n}$.
For $a,b\in\mathbb{F}_{p^n}$, we let the entries of the $c$-Difference Distribution Table ($c$-DDT) be defined by $_c\DDT_F[a,b]=\#{\{x\in\mathbb{F}_{p^n} : F(x+a)-cF(x)=b\}}$. We call the quantity
\[
_c\delta_{F}=\max\left\{{_c\DDT}_F[a,b]\,:\, a,b\in \mathbb{F}_{p^n}, \text{ and } a\neq 0 \text{ if $c=1$} \right\}\]
the {\em $c$-differential uniformity} of~$F$.

\end{definition}
In characteristic $2$, the case $c=1$ clearly recovers the usual definition of DDT and differential uniformity in~\eqref{eq:ddt} and \eqref{eq:diff_uni}. Again, the function $F$ has minimal statistical bias if the $c$-differential uniformity is as low as possible. If a function $F$ achieves the minimal possible $c$-differential uniformity $1$, we call it  \emph{perfect $c$-nonlinear (PcN)}. 
\begin{remark}
    Since our investigations in this paper are largely independent of the characteristic of the underlying finite field, following~\cite{cdiff}, we chose to discuss the $c$-differential uniformity and its related concepts in full generality in all positive characteristics.  We want to note that while symmetric cryptography in odd characteristic remains a rarity, there are some recent developments in this direction~\cite{oddchar_sym}.
\end{remark}

The new notion of $c$-differential uniformity has led to much new research, counting more than a dozen papers in less than two years dedicated to it (see e.g.~\cite{cdiff1,cdiff3,cdiff2}, and in particular the survey paper~\cite{cdiff_survey} and the references therein), usually focusing on determining the $c$-differential uniformity of functions $F \colon \F_{p^n} \rightarrow \F_{p^n}$ that have seen use or are possible candidates for S-boxes. Despite the considerable attention that the $c$-differentials have received, a practical analysis has so far been lacking.

\subsection{Our Contribution and organization of the paper}

The study of the $c$-differential uniformity serves as an example of two important and much more general questions for the design of block ciphers: 
\begin{itemize}
    \item Is it possible to give general conditions when statistical biases for functions of a certain pattern are \emph{guaranteed}?
    \item If we know that a function has a statistical bias of a certain form, can we find a framework to analyze the possibility of a practical attack using this bias?
\end{itemize}
In this paper, we are going to give answers to these two questions using the example of biases encoded by $c$-differential uniformities. 

%e want to emphasize that the motivation of this paper is to achieve \emph{theoretical} results on (differential) biases, and a general discussion on possible attacks based on those biases. We will not construct a specific attack - and indeed it is (at the moment) unclear to us if the biases that provably exist can be used to mount an attack on a cipher that is currently in use. 

We want to emphasize that the purposes of this paper is to show that certain differential biases are essentially inevitable, and to provide a general discussion on whether it is possible to mount attacks based on those biases. It remains an open question whether these biases (that provably exist) can be used to mount a specific attack on a cipher that is currently in use.

%Our goal is thus to thoroughly investigate the notion of $c$-differentials and the $c$-differential uniformity and its impacts on the practical security of a block cipher. 

Broadly speaking, the contribution of this paper is thus twofold: In Section~\ref{sc:1}, we show that for a wide class of functions $F \colon \F_{p^n} \rightarrow \F_{p^n}$ the $c$-differential uniformity is actually the \emph{worst} possible for almost all $c$. More precisely,  we show in Theorem~\ref{thm:main} that in characteristic $2$ there is an \emph{explicit} constant $B$, independent of the size of the finite field $\mathbb F_q$, such that for all $c$'s outside a set of size $B$  any function of odd degree whose first and second Hasse derivative is non-zero has worst possible $c$-differential uniformity. Note that functions with second Hasse derivative equal to zero are highly non-generic and can be completely characterized (see Remark \ref{rem:secondhasse}). The proofs in this section use novel techniques for this line of research, relying on the theory of algebraic function fields, Galois theory, and algebraic geometry. 

 The results of Section~\ref{sc:1} indicate that in many cases, some extreme statistical biases in the $c$-differences are actually inevitable. This result of course leads to the obvious question whether  or how one can use those biases to mount an attack on a cipher.

In Section~\ref{sc:2}, we discuss the possibility of constructing such an attack. In particular, we investigate what kind of cipher could be theoretically vulnerable to an attack based on a statistical weakness in the $c$-differential uniformity and we relate the $c$-differential uniformity to a special kind of "regular" differential attack. To do this, we introduce a general form of differential uniformity using a (more  arbitrary) binary operation $\circ$ replacing the usual XOR. We in particular investigate the interaction of these generalized differences with the linear layer and the key addition process of a standard SPN. 

We finish our paper with a conclusion and some interesting practical and theoretical questions related to our results.

\section{On the $c$-differential uniformity of low degree polynomials} \label{sc:1}

Whereas most of the papers that appeared so far in the literature on $c$-differential uniformities deal with monomials (see e.g. \cite{Bartoli2020,cdiff,cdiff1,Mesnager,Wang,Tu,zha2021some}), in this section, we investigate necessary constraints on polynomials with low (i.e. good)  $c$-differential uniformity. We start with a basic overview over the tools we employ to get the result.

\subsection{Tools and Notation}
\paragraph{Global Function Fields and Galois Theory}
We give a brief overview on function fields and Galois theory to the extent that is needed for this section. We broadly follow the notation from~\cite{stichtenoth2009algebraic}.

Let $\mathbb F_q(t)$ be the rational function field in the variable $t$ over the finite field of order $q$. A function field $F$ is an algebraic extension of $\mathbb F_q(t)$. The field of constants $k_F$ of $F$ is the subfield of elements of $F$ that are algebraic over $\mathbb F_q$.

A valuation ring $\mathcal O$ is a subring of $F$ that contains $\mathbb F_q$, is different from $F$,  and such that if $x\not\in \mathcal O$, then $1/x\in \mathcal O$. A place $P$ is the unique maximal ideal of a valuation ring $\mathcal O$ and it is principal. The valuation ring attached whose maximal ideal is the place $P$ is denoted by $\mathcal O_P$.

Fix a place $P$. Each element $z\in F\setminus \{0\}$ can be uniquely written as $z=t^nu$, where $t$ is a prime element for a place $P$  and $u \in \mathcal O_P$ is invertible. We associate to $P$ a function $v_P : F\to \mathbb{Z}\cup \{\infty\}$, called valuation at $P$, defined as $v_P(z)=n$, if $z=t^nu \in F\setminus \{0\}$,  and $v_P(z)=\infty$, if $z=0$.

An inclusion $F\subseteq L$ of function fields is said to be a function field extension, and we will denote it by $L:F$. The degree of $L:F$ is simply the integer $[L:F]:=\dim_{F}(L)$. If $P$ is a place of $F$, there are $Q_1,\dots, Q_\ell$ places of $L$ above $P$, i.e. places of $L$ that contain $P$. The relative degree of $Q_i$ over $P$ is the integer $f(Q_i\mid P)=[\mathcal O_{Q_i}/ Q_i:\mathcal O_P/P]$. 

The ramification index $e(Q_i\mid P) := e$ of $Q_i$ over $P$ is a natural number such that 
$$v_{Q_i}(x)=e\cdot v_P(x),  \quad \forall  x \in  F.$$
We say that $Q_i\mid P$ is ramified if $e(Q_i\mid P) >1$, and unramified if $e(Q_i\mid P) =1$.
The fundamental equality for function field extensions states 
\[\sum^{\ell}_{i=1} f(Q_i\mid P)e(Q_i\mid P)=[L:F].\]
A finite separable extension of fields $M\supseteq F$ is said to be Galois if $\Aut_F(M):=\{f\in \Aut(M): \; f(x)=x \; \forall x\in F\}$ has size $[M:F]$. Let us recall that every splitting field $M$ of a separable polynomial in $F[x]$ is a Galois extension of $F$. Moreover, every Galois extension of $F$ can be seen as a splitting field of some polynomial in $F[x]$.

\begin{definition}
  Let $M:F$ be a Galois extension of function fields and let $R$ be a place of $M$ lying over a place $P$ of $F$. Then we define the decomposition group as
\[D(R \mid P)=\{g\in G\mid g(R)=R\}\]
and the inertia group as
\[I(R\mid P)=\{g\in D(R\mid P) \mid v_R(g(s)-s)\geq 1\quad \forall s\in \mathcal O_R\}. \]
\end{definition}

The following result is useful to connect the action of the Galois group on the roots to the intermediate splitting of places. See  \cite[Satz 1]{van1935zerlegungs}.
\begin{lemma}\label{lemma:orbits}
Let $L:K$ be a finite separable extension of function fields, let $M$ be its Galois
 closure and $G:= \Gal(M:K)$ be its Galois group. 
 Let $P$ be a place of $K$ and $\mathcal Q$ be the set of places of $L$ lying above $P$.
Let $R$ be a place of $M$ lying above $P$. Then we have the following:
\begin{enumerate}
\item There is a natural bijection between $\mathcal Q$ and the set of orbits of $H:=\mathrm{Hom}_K(L,M)$ under the action of the decomposition group $D(R\mid P)=\{g\in G\,\mid \, g(R)=R\}$.
\item  Let $Q\in \mathcal Q$ and let $H_Q$ be the orbit of $D(R\mid P)$ corresponding to $Q$. Then $|H_Q|=e(Q\mid P)f(Q\mid P)$ where $e(Q\mid P)$ and $f(Q\mid P)$ are ramification index and relative degree, respectively. 

\item The orbit $H_Q$ partitions further under the action of the inertia group $I(R\mid P)$ into $f(Q\mid P)$ orbits of size $e(Q\mid P)$. 

\end{enumerate}
\end{lemma}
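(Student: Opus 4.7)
The plan is to reduce everything to a statement about double cosets in $G$. Since $M$ is the Galois closure of $L$ over $K$, the extension $M:K$ is normal, and any $K$-embedding $L \hookrightarrow M$ extends to an element of $G$ by the usual lifting lemma for embeddings into a normal closure. Setting $H' := \Gal(M:L)$, this yields a bijection $G/H' \to H$ via $gH' \mapsto g|_L$, under which the post-composition $G$-action on $H$ becomes left multiplication on $G/H'$. Moreover $M:L$ is itself Galois, since any polynomial that splits $M$ over $K$ also splits $M$ over $L$, so $H'$ acts transitively on the places of $M$ above any fixed place of $L$.

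To each $g \in G$ I associate the place $Q_g := g^{-1}(R) \cap L$ of $L$, which lies above $P$ because $g$ fixes $K$, and which depends only on the coset $gH'$ because every $h \in H'$ fixes $L$ pointwise. For part (1) I show that $Q_{g_1} = Q_{g_2}$ iff the places $g_1^{-1}(R)$ and $g_2^{-1}(R)$ of $M$ lie above a common $Q \in \mathcal{Q}$; by transitivity of $H'$ on the places of $M$ above $Q$, this is equivalent to the existence of $h \in H'$ with $h\, g_1^{-1}(R) = g_2^{-1}(R)$, hence to $g_2 h g_1^{-1} \in D(R\mid P)$, hence to $g_2 H'$ and $g_1 H'$ lying in the same $D(R\mid P)$-orbit on $G/H'$. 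Therefore the fibers of $g \mapsto Q_g$ are exactly the $D(R\mid P)$-orbits, which gives the bijection of part (1).

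For parts (2) and (3), fix $g$ with $Q = Q_g$ and set $R' := g^{-1}(R)$. The key identity is the conjugation $g^{-1} D(R\mid P) g = D(R'\mid P)$, and likewise for the inertia groups. The stabilizer in $D(R\mid P)$ of the coset $gH'$ has the same order as $g^{-1} D(R\mid P) g \cap H' = D(R'\mid P) \cap \Gal(M:L) = D(R'\mid Q)$, so the $D$-orbit of $gH'$ has size
\[
\frac{|D(R\mid P)|}{|D(R'\mid Q)|} \;=\; \frac{e(R'\mid P)\, f(R'\mid P)}{e(R'\mid Q)\, f(R'\mid Q)} \;=\; e(Q\mid P)\, f(Q\mid P),
\]
using multiplicativity of $e$ and $f$ in the tower $R'\mid Q\mid P$ together with the Galois identity $|D(\cdot\mid\cdot)| = e\cdot f$. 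The identical computation with $I$ replacing $D$ and the identity $|I(\cdot\mid\cdot)| = e$ yields an $I$-orbit of size $e(R'\mid P)/e(R'\mid Q) = e(Q\mid P)$, so $H_Q$ splits into $e(Q\mid P)f(Q\mid P)/e(Q\mid P) = f(Q\mid P)$ such sub-orbits, proving (3). The only real obstacle is the conjugation bookkeeping that moves decomposition and inertia groups between the towers $M:K$ and $M:L$; once this is set up cleanly, the arithmetic identities for $e$ and $f$ do all the work.
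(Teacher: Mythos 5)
Your proof is correct, and it supplies something the paper itself does not: the paper states this lemma with only a pointer to van der Waerden's \emph{Satz 1} and gives no argument. What you wrote is essentially the classical decomposition-group/double-coset proof behind that citation: identify $\mathrm{Hom}_K(L,M)$ with $G/H'$ for $H'=\Gal(M:L)$, send $gH'$ to the place of $L$ under $g^{-1}(R)$, recognize the fibers as $D(R\mid P)$-orbits, and then let orbit--stabilizer plus the identities $|D|=ef$, $|I|=e$ and multiplicativity of $e$ and $f$ in the tower $R'\mid Q\mid P$ do the counting. Two small points are worth making explicit. First, for part (1) you only show the fibers of $g\mapsto Q_g$ are the $D(R\mid P)$-orbits; to get a bijection with all of $\mathcal Q$ you also need surjectivity, which follows from transitivity of $G$ on the places of $M$ above $P$ (every $Q\in\mathcal Q$ has some $S\mid Q$ in $M$, and $S=g^{-1}(R)$ for some $g$). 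Second, in part (3) your orbit-size computation is carried out for one representative $gH'$ of $H_Q$; to conclude that \emph{every} $I(R\mid P)$-orbit inside $H_Q$ has size $e(Q\mid P)$ you should either note that $I(R\mid P)$ is normal in $D(R\mid P)$ (so all stabilizers of points in a single $D$-orbit intersect $I$ in conjugate, hence equicardinal, subgroups) or observe that the same computation applies to any $g'$ with $Q_{g'}=Q$ because $e$ and $f$ are constant over the places of $M$ above a fixed place in a Galois extension. You also implicitly use $|D|=ef$, which requires the residue field extensions to be separable; this holds here since residue fields of global function fields are finite. With these remarks your argument is complete.
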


It follows immediately that
\begin{corollary}\label{cor:nuovo}
 If $L:F$ is ramified at $P$ then $D(R\mid P)$ is non-trivial. In particular, $L: F$  is ramified at $P$ if and only if $M: F$ is ramified at $P$.
\end{corollary}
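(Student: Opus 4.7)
The plan is to deduce Corollary \ref{cor:nuovo} directly from the three parts of Lemma \ref{lemma:orbits}, together with the standard fact that in a Galois extension $M:F$ the ramification index satisfies $e(R\mid P)=|I(R\mid P)|$.

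First I would handle the main assertion that $L:F$ ramified at $P$ implies $D(R\mid P)$ non-trivial. By assumption there is some $Q\in\mathcal Q$ with $e(Q\mid P)>1$. Part (3) of Lemma \ref{lemma:orbits} says $H_Q$ splits under the action of $I(R\mid P)$ into orbits of length $e(Q\mid P)>1$; in particular $I(R\mid P)$ has a non-trivial orbit, hence $I(R\mid P)\neq\{1\}$, and since $I(R\mid P)\subseteq D(R\mid P)$ the decomposition group is non-trivial as well. Using $e(R\mid P)=|I(R\mid P)|$ this also gives $e(R\mid P)>1$, so $M:F$ is ramified at $P$, which already yields one direction of the ``if and only if''.

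For the converse direction, I would argue by contrapositive. Assume $L:F$ is unramified at $P$, so that $e(Q\mid P)=1$ for every $Q\in\mathcal Q$. By Lemma \ref{lemma:orbits}(3), each orbit of $I(R\mid P)$ on $H_Q$ is a singleton; since the orbits $H_Q$ partition $H=\mathrm{Hom}_F(L,M)$ by Lemma \ref{lemma:orbits}(1), the group $I(R\mid P)$ fixes every $\sigma\in H$ pointwise. Hence for each $g\in I(R\mid P)$ we have $g\circ\sigma=\sigma$ for all $\sigma\in H$, i.e.\ $g$ restricts to the identity on the subfield $\sigma(L)\subseteq M$ for every embedding $\sigma$.

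The final step is to identify $M$ with the compositum of the subfields $\sigma(L)$ for $\sigma\in H$, which is the standard description of the Galois closure of $L:F$ sitting inside the Galois extension $M:F$. From this identification $g$ fixes $M$ pointwise, so $g=1$, and therefore $I(R\mid P)=\{1\}$; equivalently $e(R\mid P)=1$, so $M:F$ is unramified at $P$, contradicting the assumption. The only slightly delicate point in the argument is this last identification of $M$ as the compositum of the $\sigma(L)$, which is exactly the place where the non-trivial direction of the equivalence is leveraged; everything else is a direct bookkeeping with the orbit decomposition provided by Lemma \ref{lemma:orbits}.
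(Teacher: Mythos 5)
Your argument is correct and matches the paper's intent: the paper offers no written proof (it states the corollary ``follows immediately'' from Lemma~\ref{lemma:orbits}), and your deduction via part (3) of that lemma, the containment $I(R\mid P)\subseteq D(R\mid P)$, and the identity $e(R\mid P)=|I(R\mid P)|$ (valid here since residue fields of global function fields are finite, hence perfect) is exactly the intended route. The contrapositive for the converse, using that $M$ is the compositum of the $\sigma(L)$, is also sound.
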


Notice that if $M:K$ is a Galois extension of function fields over $\mathbb F_q$, then $kM:kK$ is Galois for any extension $k$ of $\mathbb F_q$ (including $k=\overline{\mathbb F}_q)$.
It is well known, see for example \cite{stichtenoth2009algebraic}, that the geometric Galois group is generated by the inertia groups.
\begin{lemma}\label{lemma:genbyinertia}
Let $L:K$ be a finite separable extension of function fields, let $M$ be its Galois
 closure and let $G:= \Gal(\overline{\mathbb F}_qM:\overline{\mathbb F}_qK)$.
 Then $G$ is generated by the inertia groups 
 $I(R\mid P)$, i.e. \[G=\langle I(R\mid P):  \; P \; \text{place of $K$,}\quad R\mid P,\quad R \; \text{place of $M$} \rangle.\]
\end{lemma}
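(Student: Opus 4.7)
The plan is to consider the subgroup $H$ of $G$ generated by all inertia groups and argue that its fixed field inside $\overline{\mathbb F}_q M$ is an everywhere unramified extension of $\overline{\mathbb F}_q K$, hence trivial, so that $H = G$. The first observation is that $H$ is normal in $G$: for any $g \in G$ and any place $R$ of $M$ above $P$, one has $g I(R\mid P) g^{-1} = I(g(R)\mid P)$, so conjugation permutes the set of inertia groups. Implicit here is the standard identification between inertia groups attached to places of the arithmetic extension $M : K$ and those attached to the corresponding places of the geometric extension $\overline{\mathbb F}_q M : \overline{\mathbb F}_q K$, which is justified by the fact that constant-field extension leaves ramification indices unchanged.

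Next, I would form the fixed field $E := (\overline{\mathbb F}_q M)^H$, giving a tower $\overline{\mathbb F}_q K \subseteq E \subseteq \overline{\mathbb F}_q M$ with the top extension Galois of group $H$. For any place $\widetilde R$ of $\overline{\mathbb F}_q M$ lying above a place $\widetilde P$ of $\overline{\mathbb F}_q K$, the inertia group $I(\widetilde R\mid \widetilde P)$ coincides (under the identification above) with $I(R\mid P)$ for the restrictions $R = \widetilde R \cap M$ and $P = \widetilde P \cap K$, hence it is contained in $H$. By multiplicativity of ramification indices along the tower, this forces the ramification of $\widetilde R \cap E$ over $\widetilde P$ to be trivial, i.e.\ the extension $E : \overline{\mathbb F}_q K$ is everywhere unramified.

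Finally, one invokes that $\overline{\mathbb F}_q K$ admits no nontrivial everywhere unramified extensions of function fields. For the applications intended in Section~\ref{sc:1} the base $K = \mathbb F_q(t)$ is rational, so this is the classical statement that $\mathbb P^1_{\overline{\mathbb F}_q}$ has trivial \'etale fundamental group, which follows from Riemann--Hurwitz: any nontrivial unramified cover of degree $n > 1$ would produce a smooth curve of genus $1 - n < 0$, a contradiction. Therefore $E = \overline{\mathbb F}_q K$ and $H = G$, as desired.

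The main delicate point is the second step, namely the bookkeeping of places and inertia groups under the constant-field extension from $K$ to $\overline{\mathbb F}_q K$: one must verify that every inertia subgroup of $G$ really is captured by (or conjugate to) a subgroup $I(R\mid P)$ with $P$, $R$ places of $K$, $M$ respectively, so that no geometric ramification is lost when passing to the fixed field of $H$. Once this compatibility is made precise via the constancy of ramification indices under $\mathbb F_q \to \overline{\mathbb F}_q$, the remainder reduces to the short Riemann--Hurwitz genus count just cited.
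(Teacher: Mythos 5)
Your argument is correct, and it supplies an actual proof where the paper offers none: the paper simply records this lemma as well known and points to the literature, so there is no in-text proof to compare against. Your route is the standard one -- let $H$ be the (normal, by $gI(R\mid P)g^{-1}=I(g(R)\mid P)$) subgroup generated by all inertia groups, pass to the fixed field $E=(\overline{\mathbb F}_qM)^H$, observe that every inertia group of $E:\overline{\mathbb F}_qK$ is the image of an inertia group of $\overline{\mathbb F}_qM:\overline{\mathbb F}_qK$ and hence trivial, and conclude $E=\overline{\mathbb F}_qK$ because an everywhere unramified extension of the base is trivial. The bookkeeping you flag as delicate is fine: constant field extensions are unramified, so $I(\widetilde R\mid\widetilde P)$ is canonically identified with $I(R\mid P)$ for the restricted places, and each $I(R\mid P)\leq\Gal(M:K)$ lands inside $G=\Gal(M:k_MK)$ because inertia acts trivially on residue fields and therefore fixes the constant field of $M$.

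One substantive point you correctly isolate, and which is worth stating explicitly: the final step genuinely requires that $\overline{\mathbb F}_qK$ admit no nontrivial everywhere unramified extension, and this holds only because in the paper's application $K=\mathbb F_q(t)$ is rational (Riemann--Hurwitz gives $g_E=1-[E:\overline{\mathbb F}_qK]<0$ for a nontrivial unramified cover of $\mathbb P^1$, which is absurd; separability of $E:\overline{\mathbb F}_qK$, needed for Hurwitz, is inherited from the Galois extension $\overline{\mathbb F}_qM:\overline{\mathbb F}_qK$). For a base $K$ of genus at least one the statement as printed is actually false -- an unramified degree-two cover of an elliptic function field is its own Galois closure, yet all inertia groups are trivial -- so the lemma should carry the hypothesis that $K$ is rational (or that $\overline{\mathbb F}_qK$ has genus zero). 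Since Theorem~\ref{thm:main} only ever invokes the lemma with $K=\mathbb F_q(t)$, your proof covers everything the paper needs.
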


The following result can be deduced from \cite{kosters2017short}.

\begin{theorem}\cite[Theorem 3.3]{bartoli2022algebraic}
\label{thm:existence_tot_split}
Let $p$ be a prime number, $m$ a positive integer, and $q=p^m$.
Let $L:F$ be a separable extension of global function fields over $\mathbb F_q$ of degree $n$,  $M$ be the Galois closure of $L:F$, and suppose that the field of constants of $M$ is $\mathbb F_q$.
There exists an explicit constant $C\in \mathbb R^+$ depending only on the genus of $M$ and the degree of $L:F$ such that if $q>C$ then $L:F$ has a totally split place.
\end{theorem}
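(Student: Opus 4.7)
The plan is to reduce the statement to the existence of a place of $F$ that is totally split in the Galois closure $M$, and then to invoke an effective version of the Chebotarev density theorem for function fields, which is essentially the content of~\cite{kosters2017short}.

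First I would establish the following Galois-theoretic reduction: a place $P$ of $F$ is totally split in $L:F$ if and only if it is totally split in $M:F$. The ``if'' direction is immediate from multiplicativity of ramification index and relative degree in towers. For the converse, fix a place $R$ of $M$ above $P$ and set $G=\Gal(M:F)$, $H=\Gal(M:L)$, $D=D(R\mid P)$. By Lemma~\ref{lemma:orbits}, the places of $L$ above $P$ correspond bijectively to the $D$-orbits on $\mathrm{Hom}_F(L,M)\cong G/H$, and each orbit $H_Q$ has size $e(Q\mid P)f(Q\mid P)$. Total splitting of $L:F$ at $P$ forces every orbit to be a singleton, i.e., $D\subseteq \bigcap_{g\in G} gHg^{-1}$; since $M$ is the Galois closure of $L:F$, the normal core of $H$ in $G$ is trivial, so $D=\{1\}$ and $P$ is totally split in $M:F$.

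Second, I would apply the effective Chebotarev density theorem to the Galois extension $M:F$. The hypothesis $k_M=\F_q$ forces $k_F=\F_q$ as well, so $M:F$ is geometric. Applied to the trivial conjugacy class $\{1\}\subseteq G$, the theorem of~\cite{kosters2017short} yields a lower bound of the form
\[
N_{\mathrm{split}}(F,M)\;\geq\;\frac{q+1}{|G|}-\frac{c_1\,g_M\sqrt{q}}{|G|}-c_2\,|\mathrm{Ram}(M:F)|,
\]
where $N_{\mathrm{split}}(F,M)$ is the number of degree-one places of $F$ totally split in $M$ and $|\mathrm{Ram}(M:F)|$ is itself bounded in terms of $g_M$ and $|G|$ via the Riemann--Hurwitz formula. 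Since $|G|\leq n!$ with $n=[L:F]$, the entire right-hand side depends only on $g_M$ and $n$.

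Solving for when this lower bound becomes strictly positive produces an explicit constant $C=C(g_M,n)$ such that $q>C$ guarantees $N_{\mathrm{split}}(F,M)\geq 1$; by the reduction, any such place is also totally split in $L:F$. The main obstacle is the effective Chebotarev bound in the second step: the geometric input, namely a Weil-type estimate for the number of $\F_q$-rational points on the curves associated to $M$, is what makes $C$ explicit rather than merely existential, and this is precisely what~\cite{kosters2017short} supplies. By contrast, the Galois-theoretic reduction in the first step is purely formal once Lemma~\ref{lemma:orbits} is in hand.
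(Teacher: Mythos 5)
Your proposal is correct: the group-theoretic reduction (trivial core of $H$ in $G$ plus Lemma~\ref{lemma:orbits} forces $D(R\mid P)=\{1\}$, so total splitting in $L:F$ and in $M:F$ coincide) and the counting step via Hasse--Weil on $M$ with the ramified places controlled by Riemann--Hurwitz and $|G|\leq n!$ are exactly the standard derivation. The paper itself gives no proof of this statement --- it imports it verbatim from the cited references --- and your argument is essentially the one those sources supply, so there is nothing to contrast.
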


The following corollary shows that we can deduce arithmetic information (splitting over $\mathbb F_q$) from geometric information (splitting over $\overline{\mathbb F}_q$, i.e. ramification).
\begin{corollary}\label{cor:geomtoarith}
Let $p$ be a prime number, $m$ a positive integer, and $q=p^m$.
Let $L:F$ be a separable extension of global function fields over $\mathbb F_q$ of degree $n$,  $M$ be the Galois closure of $L:F$. Then if $\Gal(\overline{\mathbb F}_qM:\overline{\mathbb F}_qF)=S_n$, we have that $L:F$ has a totally split place (over $\mathbb F_q$).
\end{corollary}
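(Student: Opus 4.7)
The plan is to reduce directly to Theorem~\ref{thm:existence_tot_split}, whose hypothesis differs from ours only in that it requires the constant field of $M$ to be $\mathbb{F}_q$. So the core of the corollary is the implication: if the geometric Galois group equals $S_n$, then $M$ has constant field $\mathbb{F}_q$ (after which Theorem~\ref{thm:existence_tot_split} fires, provided $q$ is above the explicit constant $C$ from its statement).

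First I would set up the standard dictionary between the arithmetic Galois group $G_{\mathrm{arith}} := \Gal(M:F)$ and the geometric Galois group $G_{\mathrm{geom}} := \Gal(\overline{\mathbb F}_q M : \overline{\mathbb F}_q F)$. Let $k_M$ denote the constant field of $M$; then $k_M F$ is a subfield of $M$ with $M:k_M F$ a geometric Galois extension. A classical fact (see \cite{stichtenoth2009algebraic}) yields the isomorphism
\[
G_{\mathrm{geom}} \;\cong\; \Gal(M : k_M F),
\]
realized as a normal subgroup of $G_{\mathrm{arith}}$, with quotient $\Gal(k_M F : F) \cong \Gal(k_M : \mathbb F_q)$. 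In particular,
\[
|G_{\mathrm{arith}}| \;=\; |G_{\mathrm{geom}}| \cdot [k_M : \mathbb F_q].
\]

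Second, I would use that since $M$ is the Galois closure of a degree-$n$ separable extension $L:F$, the group $G_{\mathrm{arith}}$ acts faithfully and transitively on the $n$ embeddings of $L$ into $M$ fixing $F$, hence embeds into $S_n$. This gives $|G_{\mathrm{arith}}| \le n!$. Combining this with the displayed equality and the assumption $|G_{\mathrm{geom}}| = n!$ forces $[k_M : \mathbb F_q] = 1$, i.e. $k_M = \mathbb F_q$. Moreover, it follows that $G_{\mathrm{arith}} = S_n$ as well.

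Finally, having verified the constant-field hypothesis, I would invoke Theorem~\ref{thm:existence_tot_split}: for $q$ larger than the explicit constant $C$ depending only on $n$ and the genus of $M$, the extension $L:F$ admits a totally split place over $\mathbb F_q$. The main conceptual obstacle is nothing deeper than correctly articulating the arithmetic/geometric dictionary above; the counting step is then immediate once one accepts that a transitive subgroup of $S_n$ can have order at most $n!$. I would note that the corollary as stated is implicitly for $q > C$, since it is a direct consequence of Theorem~\ref{thm:existence_tot_split}, whose same constant $C$ governs the conclusion here.
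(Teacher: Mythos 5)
Your proposal is correct and follows essentially the same route as the paper: identify the geometric Galois group with $\Gal(M:k_M F)$, observe that it sits inside $\Gal(M:F)\leq S_n$, and conclude that the constant field of $M$ is $\mathbb F_q$ so that Theorem~\ref{thm:existence_tot_split} applies. Your closing remark that the conclusion implicitly requires $q>C$ is a fair reading of the statement, and applies equally to the paper's own proof.
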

\begin{proof}
It is a standard fact that the Galois group of $\overline{\mathbb F}_qM:\overline{\mathbb F}_qF$ is equal to  the Galois group of $M:k_F F$, where $k_F$ is the constant field of $M$. Therefore, since
$S_n=\Gal(M:k_F F)\leq \Gal(M: F)\leq S_n$, we have that $k_F=\mathbb F_q$ by the Galois correspondence, and the final claim follows using Theorem \ref{thm:existence_tot_split}.

\end{proof}

The following result follows from \cite[Proposition 1, Page 275]{hashimoto2003galois}.
\begin{proposition}\label{prop:gensn}
Let $G$ be a transitive subgroup of $S_n$ generated by transpositions. Then $G=S_n$.
\end{proposition}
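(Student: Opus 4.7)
The plan is to translate the hypothesis into a graph-theoretic statement. Let $T \subseteq G$ be a set of transpositions generating $G$, and associate to $T$ the simple graph $\Gamma$ on the vertex set $\{1,\dots,n\}$ whose edges are exactly the unordered pairs $\{i,j\}$ with $(i\,j) \in T$. The strategy reduces to two self-contained claims: first, that transitivity of $G$ forces $\Gamma$ to be connected; second, that connectedness of $\Gamma$ forces $\langle T \rangle$ to contain every transposition in $S_n$, hence to equal $S_n$.

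For the first claim I would argue by contraposition. If $\Gamma$ has connected components $C_1,\dots,C_s$ with $s \geq 2$, then every generating transposition $(i\,j) \in T$ moves two points lying in the same component, so each element of $T$ preserves the partition $\{C_1,\dots,C_s\}$ setwise. Consequently the subgroup $\langle T \rangle = G$ stabilizes $C_1$ setwise, which prevents $G$ from sending a point of $C_1$ outside $C_1$, contradicting transitivity.

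For the second claim I would prove, by induction on the length of a shortest path between $a$ and $b$ in $\Gamma$, that the transposition $(a\,b)$ lies in $G$ for every pair $a \neq b$. The base case (length one) is immediate from the definition of $\Gamma$. For the inductive step, pick a vertex $v$ adjacent to $b$ with smaller distance to $a$; then $(a\,v) \in G$ by induction and $(v\,b) \in T \subseteq G$, and the conjugation identity $(a\,b) = (v\,b)(a\,v)(v\,b)$ places $(a\,b)$ in $G$. Since $S_n$ is generated by all of its transpositions, this yields $G = S_n$.

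The argument is essentially combinatorial and no deeper group-theoretic machinery is needed; the only points that require care are distinguishing setwise from pointwise fixation of the components in the first step and writing down the correct conjugation identity in the inductive step. I do not anticipate a real obstacle beyond these routine checks.
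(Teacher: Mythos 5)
Your proof is correct. The paper does not actually prove this proposition; it simply cites it as following from an external reference, so your self-contained argument is a welcome addition rather than a deviation. The graph-theoretic reduction is the standard one, and both halves check out: disconnectedness of $\Gamma$ would make each component a block stabilized setwise by every generator and hence by $G$, contradicting transitivity; and the induction on path length, using the conjugation identity $(a\,b)=(v\,b)(a\,v)(v\,b)$ with $v$ a neighbour of $b$ strictly closer to $a$, produces every transposition of $S_n$ inside $G$. No gaps.
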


\paragraph{Curves and Varieties over Finite Fields}

As a notation, $\mathbb{P}^r(\mathbb{F}_q)$ and $\mathbb{A}^r(\mathbb{F}_q)$ denote the projective and the affine space of dimension $r\in \mathbb{N}$ over the finite field $\mathbb{F}_q$, $q$ a prime power. In the cases $r=2$ and $r=3$ we denote by $\ell_{\infty}:=\mathbb{P}^2(\mathbb{F}_q)\setminus \mathbb{A}^2(\mathbb{F}_q)$ and $H_{\infty}:=\mathbb{P}^3(\mathbb{F}_q)\setminus \mathbb{A}^3(\mathbb{F}_q)$ the line and the plane at infinity respectively.  A variety and more specifically a curve, i.e. a variety of dimension 1, is described by a certain set of equations with coefficients  in a finite field $\mathbb{F}_q$. We say that a variety $\mathcal{V}$ is \emph{absolutely irreducible} if there are no varieties $\mathcal{V}^{\prime}$ and $\mathcal{V}^{\prime\prime}$ defined over the algebraic closure of $\mathbb{F}_q$ and different from $\mathcal{V}$ such that $\mathcal{V}= \mathcal{V}^{\prime} \cup \mathcal{V}^{\prime\prime}$.
If a variety $\mathcal{V}\subset \mathbb{P}^r(\mathbb{F}_q)$ is defined by homogeneous polynomials $F_i(X_0,\ldots, X_r)=0$, for  $i=1,\ldots, s$, an $\mathbb{F}_{q}$-rational point of $\mathcal{V}$ is a point $(x_0:\ldots:x_r) \in \mathbb{P}^r(\mathbb{F}_q)$ such that $F_i(x_0,\ldots, x_r)=0$, for  $i=1,\ldots s$. A point is affine if $x_0\neq 0$. The set of the $\mathbb{F}_q$-rational points of $\mathcal{V}$ is usually denoted by $\mathcal{V}(\mathbb{F}_q)$. We  denote by the same symbol homogenized polynomials and their dehomogenizations, if the context is clear. For a more comprehensive introduction to algebraic varieties and curves we refer to \cite{HKT,Hartshorne}.

In this paper we will mostly make use of hypersurfaces, i.e varieties in $\mathbb{P}^r(\mathbb{F}_q)$ of dimension $r-1$, and specifically curves ($r=2$) and surfaces  ($r=3$). Any hypersurface is defined by a unique homogeneous $f(X_0,\ldots,X_r)$ polynomial in $r+1$ variables. For the sake of convenience, for a hypersurface $\mathcal{W}: f(X_0,\ldots,X_r)=0$ we will  also make use of its affine equation $f(1,X_1\ldots,X_r)=0$. 

Singular points of algebraic curves and surfaces can be investigated via the so-called Hasse derivative; see also \cite[Page 148]{HKT}.
\begin{definition}[\cite{Hasse1936}]
 Let $F(X_1,\ldots,X_r)\in \mathbb{F}_q[X_1,\ldots,X_r]$ be a polynomial. For any $\alpha_1,\ldots,\alpha_r\in \overline{\mathbb{F}}_q$, $F(X_1+\alpha_1,\ldots,X_r+\alpha_r)$ can be written
uniquely as
$$F(X_1+\alpha_1,\ldots,X_r+\alpha_r)=\sum_{(i_1,\ldots,i_r) \in \mathbb{N}^r}F^{(i_1,\ldots,i_r)}(\alpha_1,\ldots,\alpha_r)X_1^{i_1}\cdots X_r^{i_r},$$ 
for some polynomials $F^{(i_1,\ldots,i_r)}(X_1,\ldots,X_r) \in \mathbb{F}_q[X_1,\ldots,X_r]$. For a given multi-index $i=(i_1,\ldots,i_r) \in \mathbb{N}^r$, we define
the $i$-th Hasse derivative of $F(X_1,\ldots,X_r)$ as the polynomial $F^{(i_1,\ldots,i_r)}(X_1,\ldots,X_r) \in \mathbb{F}_q[X_1,\ldots,X_r]$.
\end{definition}

It can be seen that for any monomial $X_1^{j_1}\cdots X_r^{j_r}$ its $i=(i_1,\ldots,i_r)$-th Hasse derivative is $$\binom{i_1}{j_1}\cdots \binom{i_r}{j_r}X_1^{j_1-i_1}\cdots X_r^{j_r-i_r}$$
and vanishes if $i_k>j_k$ for some $k$; see also \cite{Goldschmidt}.

As a notation, if a polynomial $f$ is univariate, we denote its derivatives as $f^{\prime}$, $f^{\prime\prime}$, \ldots.   

Let $F(X,Y)\in \mathbb{F}_q[X,Y]$ be a polynomial defining an affine plane curve $\mathcal{C}$, let $P=(u,v)\in \mathbb{A}^2(\mathbb{F}_q)$ be a point in the plane, and write
\[
F(X+u,Y+v)=F_0(X,Y)+F_1(X,Y)+F_2(X,Y)+\cdots,
\]
where $F_i$ is either zero or homogeneous of degree $i$.

The \emph{multiplicity} of $P\in \mathcal{C}$, written as $m_P(\mathcal{C})$, is the smallest integer $m$ such that $F_m\ne 0$ and $F_i=0$ for $i<m$; the polynomial $F_m$ is the \emph{tangent cone} of $\mathcal{C}$ at $P$. A linear divisor of the tangent cone is called a \emph{tangent} of $\mathcal{C}$ at $P$. The point $P$ is on the curve $\mathcal{C}$ if and only if $m_P(\mathcal{C})\ge 1$. If $P$ is on $\mathcal{C}$, then $P$ is a \emph{simple} point of $\mathcal{C}$ if $m_P(\mathcal{C})=1$, otherwise $P$ is a \emph{singular} point of $\mathcal{C}$. 
A quick criterion to decide whether an affine  point $P$ is singular is the following: $P$ is singular if and only if $F(P)=F^{(1,0)}(P)=F^{(0,1)}(P)=0$. 

It is possible to define in a similar way the multiplicity of an ideal point of $\mathcal{C}$, that is a point of the curve lying on the line at infinity.

Given two plane curves $\mathcal{A}$ and $\mathcal{B}$ and a point $P$ on the plane, the \emph{intersection number} $I(P, \mathcal{A} \cap \mathcal{B})$ of $\mathcal{A}$ and $\mathcal{B}$ at the point $P$ can be defined by seven axioms. We do not include its precise and long definition here. For more details, we refer to \cite{Fulton} and \cite{HKT} where the intersection number is defined equivalently in terms of local rings and in terms of resultants, respectively.

Concerning the intersection number, the following two classical results  can be found in most of the textbooks on algebraic curves.
\begin{lemma}\label{le:ordinary_singular}
	Let $\mathcal{A}$ and $\mathcal{B}$ be two plane curves.  For any affine point $P$, the intersection number satisfies the inequality
	\[ I(P, \mathcal{A}\cap \mathcal{B})\ge m_P(\mathcal{A}) m_P(\mathcal{B}), \]
	with equality if and only if the tangents at $P$ to $\mathcal{A}$ are all distinct from the tangents at $P$ to $\mathcal{B}$.
\end{lemma}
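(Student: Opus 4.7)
The plan is to translate $P$ to the origin and work in the local ring $\mathcal{O}_P$ of $\mathbb{A}^2$ at $P=(0,0)$, whose maximal ideal $\mathfrak{m}=(X,Y)\mathcal{O}_P$ satisfies $\dim_k \mathcal{O}_P/\mathfrak{m}^N = \binom{N+1}{2}$. Let $F, G$ be local equations of $\mathcal{A}, \mathcal{B}$ at $P$, set $m=m_P(\mathcal{A})$, $n=m_P(\mathcal{B})$, and denote the corresponding tangent cones by $F_m$ and $G_n$. Using the standard identity $I(P,\mathcal{A}\cap\mathcal{B})=\dim_k\mathcal{O}_P/(F,G)$, we may assume that $\mathcal{A}$ and $\mathcal{B}$ have no common component through $P$ (otherwise both quantities are infinite), so that $(F,G)$ is $\mathfrak{m}$-primary and all dimensions below are finite.

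For the inequality, the surjection $\mathcal{O}_P/(F,G) \twoheadrightarrow \mathcal{O}_P/((F,G)+\mathfrak{m}^{m+n})$ gives
\[
I(P,\mathcal{A}\cap \mathcal{B}) \;\geq\; \binom{m+n+1}{2} \;-\; \dim_k\!\left(\frac{(F,G)+\mathfrak{m}^{m+n}}{\mathfrak{m}^{m+n}}\right).
\]
Since $F\in\mathfrak{m}^m$ and $G\in\mathfrak{m}^n$, the $k$-linear map
\[
\varphi\colon \mathcal{O}_P/\mathfrak{m}^n \oplus \mathcal{O}_P/\mathfrak{m}^m \longrightarrow \frac{(F,G)+\mathfrak{m}^{m+n}}{\mathfrak{m}^{m+n}}, \qquad (\bar A,\bar B)\mapsto \overline{AF+BG},
\]
is well-defined and surjective, so the subtracted dimension is at most $\binom{n+1}{2}+\binom{m+1}{2}$. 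Combining with the identity $\binom{m+n+1}{2}-\binom{n+1}{2}-\binom{m+1}{2}=mn$ yields $I(P,\mathcal{A}\cap\mathcal{B})\geq mn$.

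Equality holds exactly when both (i) $\varphi$ is injective and (ii) $\mathfrak{m}^{m+n}\subseteq (F,G)$. I would show that each condition is equivalent to coprimality of the binary forms $F_m, G_n$ in $k[X,Y]$, which in turn is precisely the statement that $\mathcal{A}$ and $\mathcal{B}$ share no tangent line at $P$. For (i), inspecting the lowest-degree part of a relation $AF+BG\in\mathfrak{m}^{m+n}$ reduces to producing a syzygy of $(F_m,G_n)$; when these are coprime the Koszul complex shows no nontrivial syzygy exists in the degree range imposed by $\bar A\in\mathcal{O}_P/\mathfrak{m}^n$ and $\bar B\in\mathcal{O}_P/\mathfrak{m}^m$, killing that lowest part, and iterating by degree forces $A=B=0$. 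For (ii), coprime binary forms $F_m, G_n$ form a regular sequence, so Koszul yields $(F_m,G_n)\supseteq\mathfrak{m}^{m+n-1}$ inside $k[X,Y]$; for any $h\in\mathfrak{m}^{m+n}$ one then writes $h=aF_m+bG_n$ with $a\in\mathfrak{m}^n$, $b\in\mathfrak{m}^m$, so that the error $a(F-F_m)+b(G-G_n)$ lands in $\mathfrak{m}^{m+n+1}$, giving $\mathfrak{m}^{m+n}\subseteq (F,G)+\mathfrak{m}\cdot\mathfrak{m}^{m+n}$; Nakayama applied to the finitely generated $\mathcal{O}_P$-module $(\mathfrak{m}^{m+n}+(F,G))/(F,G)$ then forces $\mathfrak{m}^{m+n}\subseteq (F,G)$. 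Conversely, if $F_m$ and $G_n$ share a tangent factor $L$, the pair $(\overline{G_n/L},\,-\overline{F_m/L})$ is a nonzero element of $\ker\varphi$, which improves the upper bound on the subtracted dimension and breaks the equality.

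The main technical obstacle I anticipate is the equality direction, specifically the Nakayama-style lift from the graded inclusion $(F_m,G_n)\supseteq\mathfrak{m}^{m+n-1}$ in the polynomial ring up to the local inclusion $\mathfrak{m}^{m+n}\subseteq (F,G)$ in $\mathcal{O}_P$, together with the companion degree-by-degree verification of injectivity of $\varphi$. The inequality itself, the identification of common tangents with common factors of tangent cones, and the arithmetic identity reduce to routine computations.
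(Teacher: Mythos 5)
The paper does not actually prove this lemma---it is stated as a classical result with a pointer to the textbooks of Fulton and of Hirschfeld--Korchm\'aros--Torres---so there is no in-paper argument to compare against; your proposal is a correct reconstruction of the standard textbook proof (Fulton's Property (5) of the intersection number) via the local-ring definition $I(P,\mathcal{A}\cap\mathcal{B})=\dim_k\mathcal{O}_P/(F,G)$, the map $\varphi$ and the dimension count $\binom{m+n+1}{2}-\binom{n+1}{2}-\binom{m+1}{2}=mn$, and the Nakayama argument giving $\mathfrak{m}^{m+n}\subseteq(F,G)$ when the tangent cones $F_m,G_n$ are coprime. The steps you flag as delicate (degree-by-degree injectivity of $\varphi$ and the lift from the graded inclusion in $k[X,Y]$ to the local one) go through exactly as you sketch them, so the plan is sound and complete in outline.
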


\begin{theorem}[B\'ezout's Theorem]\label{th:bezout}
	Let $\mathcal{A}$ and $\mathcal{B}$ be two projective plane curves over an algebraically closed field $\mathbb{K}$, having no component in common. Let $A$ and $B$ be the polynomials associated with $\mathcal{A}$ and $\mathcal{B}$ respectively. Then
	\[
	\sum_P I(P, \mathcal{A}\cap \mathcal{B})=\deg A \cdot  \deg B,
	\]
	where the sum runs over all points in the projective plane $\mathbb{P}^2(\mathbb{K})$.
\end{theorem}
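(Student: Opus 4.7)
The plan is to reduce B\'ezout's Theorem to the theory of the resultant, together with a careful choice of coordinates that concentrates every intersection point in one affine chart. First, since $\mathbb{K}$ is algebraically closed (hence infinite) and $\mathcal{A}\cap\mathcal{B}$ is finite (as $\mathcal{A}$ and $\mathcal{B}$ share no component), I will apply a projective linear change of variables ensuring that (i) every point of $\mathcal{A}\cap\mathcal{B}$ lies in the affine chart $Z\neq 0$; (ii) the line $Z=0$ meets neither $\mathcal{A}$ nor $\mathcal{B}$ in a common point; and (iii) the first coordinates of the affine intersection points are pairwise distinct.

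Next, writing $A(X,Y),B(X,Y)\in\mathbb{K}[X,Y]$ for the dehomogenizations, I view them as elements of $\mathbb{K}[X][Y]$ and form the resultant $R(X)=\mathrm{Res}_Y(A,B)\in\mathbb{K}[X]$. The coordinate choice makes the leading coefficients of $A,B$ in $Y$ nonzero constants, so $\deg R=\deg A\cdot\deg B$. Since $\mathbb{K}$ is algebraically closed, $R$ factors as
\[
R(X)=c\prod_{\alpha}(X-\alpha)^{n_\alpha}, \qquad \sum_\alpha n_\alpha = \deg A\cdot \deg B,
\]
and the roots $\alpha$ are exactly the first coordinates of the affine intersection points.

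The heart of the argument will be to identify each exponent $n_\alpha$ with the intersection number at the unique affine intersection point $P_\alpha=(\alpha,\beta_\alpha)$ lying above it, i.e.\ $n_\alpha = I(P_\alpha,\mathcal{A}\cap\mathcal{B})$. The cleanest route is via the local-ring description $I(P,\mathcal{A}\cap\mathcal{B})=\dim_{\mathbb{K}} \mathcal{O}_{\mathbb{P}^2,P}/(A,B)$; one then obtains the equality by a careful argument with the Euclidean algorithm in the localization $\mathbb{K}[X]_{(X-\alpha)}[Y]$ and the multiplicative behaviour of the resultant under division with remainder. Alternatively, one can check that the rule $(\mathcal{A},\mathcal{B},P_\alpha) \mapsto n_\alpha$ satisfies the seven Fulton axioms that characterize $I(P,\mathcal{A}\cap\mathcal{B})$, which forces the two invariants to agree. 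Summing over $\alpha$ then yields the claimed formula.

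The main obstacle will be precisely this last identification of $n_\alpha$ with the local intersection number, since one has to bridge a global algebraic invariant (the multiplicity of $\alpha$ as a root of a resultant in $\mathbb{K}[X]$) with an a priori purely local object defined either axiomatically or via local rings. A secondary, milder subtlety will be verifying that the initial coordinate change is always achievable and that no intersection multiplicity leaks to the line at infinity; both points follow from the finiteness of $\mathcal{A}\cap\mathcal{B}$ together with $\mathbb{K}$ being infinite, which makes generic linear changes of coordinates available.
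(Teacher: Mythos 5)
The paper does not prove this statement at all: B\'ezout's Theorem is quoted as a classical result, with the reader referred to \cite{Fulton} and \cite{HKT}, so there is no in-paper argument to compare yours against. Your outline is a correct rendition of the standard resultant-based proof (the one in Walker's or in \cite{HKT}'s treatment, as opposed to Fulton's dimension-count via $\dim_{\mathbb K}\mathbb K[X,Y]/(A,B)$): a generic projective change of coordinates puts all intersection points in one affine chart with distinct first coordinates and keeps $[0:1:0]$ off both curves, the resultant $R(X)=\mathrm{Res}_Y(A,B)$ then detects the intersection points, and the multiplicity of each root of $R$ equals the local intersection number. You correctly identify the last step as the real content, and both routes you name for it (division with remainder in $\mathbb K[X]_{(X-\alpha)}[Y]$ tracking the multiplicative behaviour of the resultant, or verifying the seven axioms for the assignment $n_\alpha$) are viable.

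One place where the sketch is slightly too quick: having nonzero constant leading coefficients in $Y$ does not by itself give $\deg R=\deg A\cdot\deg B$; in general one only gets $\deg R\le \deg A\cdot \deg B$. The clean argument is to take the resultant of the \emph{homogeneous} forms with respect to $Y$, which is homogeneous of degree $\deg A\cdot\deg B$ in $(X,Z)$ and nonzero because the curves share no component; your condition that no intersection point lies on $Z=0$ then guarantees that $Z$ does not divide this form, so dehomogenizing loses no degree. This is exactly what your conditions (i) and (ii) are for, so the fix is a rearrangement of the justification rather than a new idea.
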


Concerning surfaces in $\mathbb{P}^3(\mathbb{F}_q)$, i.e. variety of dimension 2 defined by a homogeneous polynomial $F(X,Y,Z,T)\in \mathbb{F}_q[X,Y,Z,T]$, the same definitions for singular points and multiplicity hold. In particular a point $P$ is singular for a surface $\mathcal{S}:F(X,Y,Z,T)=0$ if and only if 
$$F(P)=F^{(1,0,0,0)}(P)=F^{(0,1,0,0)}(P)=F^{(0,0,1,0)}(P)=F^{(0,0,0,1)}(P)=0.$$

The following is a simple result about the irreducibility of plane sections of absolutely irreducible surfaces. We include here its proof for the sake of clarity.
\begin{proposition}\label{Prop:Sections}
Let $\mathcal{S}\subset \mathbb{P}^{3}(\mathbb{F}_q)$ be an absolutely irreducible surface and consider a plane $\pi$. A singular point $O$ for the curve $\pi \cap \mathcal{S}$ is either a singular point for $\mathcal{S}$ or $\pi$  is the tangent plane to $\mathcal{S}$ at $O$. In particular, if $\pi \cap \mathcal{S}$ is reducible then either $\pi$ is the tangent plane at some point of $\pi \cap \mathcal{S}$ or $\pi$ contains a singular point of $\mathcal{S}$. 
\end{proposition}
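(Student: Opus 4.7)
The plan is to prove the first statement by a direct local calculation at the candidate singular point $O$, and then deduce the second statement by producing a singular point of $\pi \cap \mathcal{S}$ from any non-trivial factorization.

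For the first statement, I would translate $O$ to the origin and write $F = F_1+F_2+\cdots$ as its Taylor expansion at $O$ (with $F_0=0$ since $O \in \mathcal{S}$), where $F_i$ is the homogeneous component of degree $i$. If $O$ is smooth on $\mathcal{S}$, then $F_1$ is a nonzero linear form, whose zero locus is exactly the tangent plane $T_O\mathcal{S}$. The curve $\mathcal{C}=\pi\cap \mathcal{S}$ is obtained by restricting $F$ to $\pi$, so its Taylor expansion at $O$ is $F_1|_\pi + F_2|_\pi + \cdots$. By the Hasse-derivative characterization of singular points recalled before the proposition, $O$ is singular on $\mathcal{C}$ exactly when the linear part $F_1|_\pi$ vanishes identically on $\pi$. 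Since $F_1\neq 0$ defines the plane $T_O\mathcal{S}$ and $\pi$ is a $2$-dimensional linear subspace, $F_1|_\pi\equiv 0$ forces $\pi\subseteq T_O\mathcal{S}$, hence $\pi = T_O\mathcal{S}$ by dimension.

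For the second statement, I would first observe that $\pi \not\subset \mathcal{S}$: otherwise the linear form defining $\pi$ would divide $F$, contradicting the absolute irreducibility of $\mathcal{S}$ (assuming $\deg\mathcal{S}\geq 2$; the statement is vacuous otherwise). Thus $F|_\pi$ is a nonzero polynomial on $\pi\cong \mathbb{P}^2$ whose reducibility means $F|_\pi = G\cdot H$ with $G,H$ of positive degree. I would then split into two subcases: if $\gcd(G,H)=1$, Bézout's Theorem (Theorem~\ref{th:bezout}) gives a point $O$ with $G(O)=H(O)=0$, at which the gradient of $GH = G\nabla H + H\nabla G$ vanishes, so $O$ is singular on $\mathcal{C}$; if $\gcd(G,H)=R$ is non-constant, any point of the common component $V(R)$ likewise kills both $G$ and $H$ and so is singular on $\mathcal{C}$. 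In either case there exists a singular point $O$ of $\mathcal{C}$, and the first part applies to $O$: either $O$ is a singular point of $\mathcal{S}$ contained in $\pi$, or $\pi$ is the tangent plane to $\mathcal{S}$ at $O$.

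I do not anticipate a real obstacle here; the argument is essentially bookkeeping with Taylor expansions plus one application of Bézout. The only subtlety is the case of a repeated component in the reducible intersection, which needs to be handled separately (otherwise one might expect intersection points from Bézout and miss that an entire repeated curve consists of singular points); this is the small point where one must be careful in the case analysis.
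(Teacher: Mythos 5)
Your proof is correct and follows essentially the same route as the paper: a local Taylor expansion at $O$ showing that vanishing of the linear part on $\pi$ forces either a singular point of $\mathcal{S}$ or $\pi=T_O\mathcal{S}$, followed by the observation that a reducible plane curve has a singular point over $\overline{\mathbb{F}}_q$. The only difference is that you spell out the latter fact via B\'ezout and the gcd case split, which the paper simply asserts.
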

\begin{proof}
Without loss of generality we can suppose that $O$ is the origin and $\pi: Z=0$ and that $\mathcal{C}:=\pi \cap \mathcal{S}: F(X,Y)=0$,  for some  polynomial $F\in \mathbb{F}_q[X,Y]$ with $F(0,0)=F^{(1,0)}(0,0)=F^{(0,1)}(0,0)=0$. This means that the affine equation of $\mathcal{S}$ is of type $F(X,Y)+ZH(X,Y,Z)=0$ for some $H(X,Y,Z)\in \mathbb{F}_q[X,Y,Z]$. Now, either there exists a constant term in $H(X,Y,Z)$ and thus $O$ is nonsingular for $\mathcal{S}$ and $\pi$ is the tangent plane at $O$ to $\mathcal{S}$ or $H(X,Y,Z)$ possesses monomials of degree at least one and thus $O$ is a singular point for $\mathcal{S}$. 

The second part of the claim directly follows, observing that if $\mathcal{C}:=\pi \cap \mathcal{S}$ is reducible then the curve $\mathcal{C}$ possesses singular points (possibly defined over $\overline{\mathbb{F}}_q$). 

\end{proof}

Finally, we include here references for estimates on the number of $\mathbb{F}_q$-rational points of algebraic varieties over finite fields. The most celebrated result is the Hasse-Weil Theorem.

\begin{theorem}\label{Th:HW}[Hasse-Weil bound for curves]
Let $\mathcal{C}\subset \mathbb{P}^n(\mathbb{F}_q)$ be a projective absolutely irreducible non-singular curve of genus $g$ defined over $\mathbb{F}_q$. Then
\begin{equation}\label{EQ:HW1}
q+1-2g\sqrt{q}\leq \#\mathcal{C}(\mathbb{F}_q)\leq q+1+2g\sqrt{q}.
\end{equation}
\end{theorem}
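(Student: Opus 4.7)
The plan is to prove the bound via the Weil conjectures for curves, specifically the rationality, functional equation, and Riemann hypothesis for the zeta function of $\mathcal{C}$. I identify $\mathcal{C}$ with the places of its function field $F=\mathbb{F}_q(\mathcal{C})$ and set $N_n=\#\mathcal{C}(\mathbb{F}_{q^n})$, so that all of this information is packaged in
\[
Z(T)=\exp\Bigl(\sum_{n\geq 1}\frac{N_n}{n}T^n\Bigr),
\]
which can be rewritten as an Euler product over the places of $F$.

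First I would apply the Riemann-Roch theorem on $\mathcal{C}$ to divisors of degree exceeding $2g-2$, where the dimension of the linear system is controlled, in order to show that $Z(T)$ is a rational function of the form $Z(T)=P(T)/((1-T)(1-qT))$ with $P(T)\in\mathbb{Z}[T]$ of degree $2g$ and $P(0)=1$. Combining Riemann-Roch with Serre duality then yields the functional equation $Z(1/(qT))=q^{1-g}T^{2-2g}Z(T)$, which pairs the reciprocal roots of $P(T)$ under the involution $\alpha\mapsto q/\alpha$. Writing $P(T)=\prod_{i=1}^{2g}(1-\alpha_i T)$ and comparing the logarithmic derivative of $Z(T)$ with its power series definition produces the explicit trace formula
\[
N_n=q^n+1-\sum_{i=1}^{2g}\alpha_i^n.
\]

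The core and hardest step is the Riemann hypothesis for curves, namely that $|\alpha_i|=\sqrt{q}$ for every $i$. My approach would follow the Stepanov-Bombieri elementary method: work on the surface $\mathcal{C}\times\mathcal{C}$, compare the intersection numbers between the graph of the $r$-th iterated Frobenius and the diagonal, and use Riemann-Roch on $\mathcal{C}$ to construct auxiliary functions vanishing to prescribed high order at Frobenius fixed points. This produces an upper bound $N_n\leq q^n+c\,q^{n/2}$ for all sufficiently large $n$ with $c$ depending only on $g$; the functional equation converts this into a matching lower bound, and together the two-sided asymptotic estimates on $N_n$ force $|\alpha_i|=\sqrt{q}$ for every $i$. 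Once this is established, setting $n=1$ in the trace formula and applying the triangle inequality yields $|N_1-(q+1)|\leq 2g\sqrt{q}$, which is exactly the stated estimate. The main obstacle is unquestionably the Riemann hypothesis step: rationality and the functional equation are fairly formal consequences of Riemann-Roch, whereas $|\alpha_i|=\sqrt{q}$ is the deep arithmetic content of the theorem and is where the geometry of $\mathcal{C}$ genuinely enters.
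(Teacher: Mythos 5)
The paper does not prove this statement at all: Theorem~\ref{Th:HW} is quoted as a classical result (the Hasse--Weil bound), with the singular-curve refinement attributed to Aubry--Perret, so there is no in-paper argument to compare against. Your outline is the standard and correct proof: rationality of $Z(T)$ and the degree of $P(T)$ from Riemann--Roch, the functional equation pairing the inverse roots under $\alpha\mapsto q/\alpha$, the trace formula $N_n=q^n+1-\sum_i\alpha_i^n$, and then the Riemann hypothesis $|\alpha_i|=\sqrt{q}$ followed by the triangle inequality at $n=1$. One minor inaccuracy worth flagging: you conflate two distinct proofs of the Riemann hypothesis step. The Stepanov--Bombieri method is the \emph{elementary} one --- it works on $\mathcal{C}$ itself, using Riemann--Roch to build auxiliary functions vanishing to high order at Frobenius fixed points, and does not involve $\mathcal{C}\times\mathcal{C}$; the intersection-theoretic argument on the surface $\mathcal{C}\times\mathcal{C}$ (graph of Frobenius versus the diagonal, controlled by the Castelnuovo--Severi/Hodge index inequality) is Weil's original proof. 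Either route delivers the one-sided bound $N_n\leq q^n+c\,q^{n/2}$ (for infinitely many $n$ suffices), and your mechanism for upgrading it --- bounding all $|\alpha_i|\leq\sqrt{q}$ via the generating function $\sum_i \alpha_iT/(1-\alpha_iT)$ and then using the functional equation's involution to get the reverse inequality --- is sound. So the proposal is a correct proof sketch of a result the paper merely cites; you should just commit to one of the two methods for the hard step rather than describing a hybrid of both.
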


\noindent If $\mathcal{C}$ is a non-singular \emph{plane} curve, then $g=(d-1)(d-2)/2$, where $d$ is the degree of the curve $\mathcal{C}$, and \eqref{EQ:HW1} reads 
\begin{equation}\label{EQ:HW2}
q+1-(d-1)(d-2)\sqrt{q}\leq \#\mathcal{C}(\mathbb{F}_q)\leq q+1+(d-1)(d-2)\sqrt{q}.
\end{equation}

If the curve $\mathcal{C}$ is singular, there is some ambiguity in defining what an $\mathbb{F}_q$-rational point of $\mathcal{C}$ actually is.  Clearly, if $\mathcal{C}$ is non-singular, then there is a bijection between $\mathbb{F}_q$-rational places (or branches) of the function field associated with $\mathcal{C}$ and $\mathbb{F}_{q}$-rational points of $\mathcal{C}$. In the singular case, this is no more true.  We refer the interested readers to \cite[Section 9.6]{HKT} where other relations are investigated. We point out that actually the bound \eqref{EQ:HW2} holds even for singular (absolutely irreducible) curves; \cite[Corollary 2.5]{AubryPerret}.

Concerning algebraic varieties of dimension larger than one, the first estimate on the number of $\mathbb{F}_{q}$-rational points  was given by Lang and Weil \cite{LangWeil} in 1954.
\begin{theorem}\label{Th:LangWeil}[Lang-Weil Theorem]
Let $\mathcal{V}\subset \mathbb{P}^N(\mathbb{F}_q)$ be an absolutely irreducible variety of dimension $n$ and degree $d$. Then there exists a constant $C$ depending only on $N$, $n$, and $d$ such that 
\begin{equation}\label{Eq:LW}
\left|\#\mathcal{V}(\mathbb{F}_q)-\sum_{i=0}^{n} q^i\right|\leq (d-1)(d-2)q^{n-1/2}+Cq^{n-1}.
\end{equation}
\end{theorem}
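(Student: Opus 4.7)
The plan is to prove the estimate by induction on the dimension $n$ of the variety, reducing to the Hasse--Weil bound for curves (Theorem~\ref{Th:HW}, together with the Aubry--Perret extension \cite[Corollary 2.5]{AubryPerret} handling singularities) via iterated generic hyperplane sections. The base case $n=1$ is immediate: for an absolutely irreducible (possibly singular) projective curve of degree $d$ in $\mathbb{P}^N$ the arithmetic genus is bounded in terms of $d$ and $N$ only, so the deviation from $q+1$ has the shape $(d-1)(d-2)\sqrt{q} + O(1)$, with the constant absorbed into $C$.

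For the inductive step, I would first use a generic linear projection to replace $V \subset \mathbb{P}^N$ by a birationally equivalent hypersurface $W \subset \mathbb{P}^{n+1}$ of the same degree $d$. The exceptional loci of such a projection have dimension strictly less than $n$, so the discrepancy between $\#V(\mathbb{F}_q)$ and $\#W(\mathbb{F}_q)$ is $O(q^{n-1})$ and can be absorbed into $C$. Next, invoke a Bertini-type theorem to ensure that for a generic hyperplane $H \subset \mathbb{P}^{n+1}$ the intersection $W \cap H$ is absolutely irreducible of dimension $n-1$ and degree $d$, so that the inductive hypothesis applies to $W \cap H$ and gives
\[
\Bigl|\#(W \cap H)(\mathbb{F}_q) - \sum_{i=0}^{n-1} q^i \Bigr| \leq (d-1)(d-2)\,q^{n-3/2} + C'\,q^{n-2}.
\]
Then I would conclude by an incidence double count: fixing an $\mathbb{F}_q$-point $P$ of $W$, the number of $\mathbb{F}_q$-rational hyperplanes through $P$ is $\#\check{\mathbb{P}}^{n}(\mathbb{F}_q)$, and summing yields
\[
\#W(\mathbb{F}_q)\cdot \#\check{\mathbb{P}}^{n}(\mathbb{F}_q) = \sum_{H \in \check{\mathbb{P}}^{n+1}(\mathbb{F}_q)} \#(W \cap H)(\mathbb{F}_q).
\]
Dividing by $\#\check{\mathbb{P}}^{n}(\mathbb{F}_q) \sim q^n$, one separates the sum into the ``good" hyperplanes (where Bertini applies, governed by the inductive estimate) and the ``bad" hyperplanes where $W \cap H$ fails to be absolutely irreducible of the expected dimension and degree. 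The bad locus is cut out by a proper closed condition of degree bounded in terms of $d$ and $n+1$ in the dual space, hence contains $O(q^{n})$ points; coupled with the crude bound of $O(q^{n-1})$ points on each such degenerate intersection, the total bad contribution is $O(q^{n-1})$ and gets absorbed into $C\,q^{n-1}$.

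The main obstacle is twofold. First, Bertini's theorem is delicate in positive characteristic: one needs a version guaranteeing absolute irreducibility of a generic hyperplane section, and one must control the degree of the ``bad locus" in the dual projective space \emph{explicitly} so that its contribution is genuinely $O(q^{n-1})$ rather than of the same order as the main term. Second, the precise shape of the main error term $(d-1)(d-2)\,q^{n-1/2}$ must survive the induction intact; this succeeds because successive hyperplane sections preserve the degree $d$, so the Hasse--Weil constant established at the curve level transfers directly up through the induction while all other losses are swept into the constant $C$ depending only on $N$, $n$, and $d$.
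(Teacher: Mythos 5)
The paper does not prove this statement: Theorem~\ref{Th:LangWeil} is quoted as a classical result of Lang and Weil \cite{LangWeil}, so there is no in-paper argument to compare against. Your sketch is, in outline, the original Lang--Weil strategy: reduce by generic birational projection to a hypersurface in $\mathbb{P}^{n+1}$, slice by hyperplanes, apply the inductive bound to the good sections, and recover $\#W(\mathbb{F}_q)$ from the incidence identity $\#W(\mathbb{F}_q)\cdot\#\check{\mathbb{P}}^{n}(\mathbb{F}_q)=\sum_{H}\#(W\cap H)(\mathbb{F}_q)$. The bookkeeping you describe does work: the good hyperplanes contribute the main term and an error of size $\#\check{\mathbb{P}}^{n+1}(\mathbb{F}_q)\cdot(d-1)(d-2)q^{n-3/2}/\#\check{\mathbb{P}}^{n}(\mathbb{F}_q)\approx(d-1)(d-2)q^{n-1/2}$, so the leading constant survives the induction, and the base case follows from the singular Hasse--Weil bound \cite[Corollary 2.5]{AubryPerret} after projecting to a plane curve of the same degree $d$ (the $(d-1)(d-2)$ constant is the plane-curve genus bound, so you do need that projection step in the base case, not just a genus bound ``in terms of $d$ and $N$'').

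The genuine gap is the step you yourself flag and then wave through: the claim that the set of hyperplanes $H$ for which $W\cap H$ fails to be absolutely irreducible of dimension $n-1$ and degree $d$ ``is cut out by a proper closed condition of degree bounded in terms of $d$ and $n+1$ in the dual space.'' Absolute irreducibility is not visibly an algebraic condition of controlled degree on the coefficients of $H$; establishing this is the technical heart of the Lang--Weil paper. One needs a quantitative form of Noether's irreducibility theorem (or an equivalent elimination-theoretic argument): the generic hyperplane section over the function field of $\check{\mathbb{P}}^{n+1}$ is absolutely irreducible by Bertini, and the specializations where irreducibility fails are shown to lie on a hypersurface of the dual space whose degree is bounded explicitly in terms of $d$ and $n$, whence the bad locus has $O(q^{n})$ rational points with an implied constant depending only on $d$ and $n$. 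Without this explicit degree control the bad contribution could a priori be of the same order as the main term and the induction collapses. As written, your argument assumes the key lemma rather than proving it; everything else is correct.
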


\noindent Although the constant $C$ was not  computed in \cite{LangWeil}, explicit estimates have been provided for instance in  \cite{CafureMatera,Ghorpade_Lachaud,Ghorpade_Lachaud2,LN1983,WSchmidt,Bombieri} and they have the general shape $C=f(d)$ provided that $q>g(n,d)$, where $f$ and $g$ are polynomials of (usually) small degree. We refer to \cite{CafureMatera} for a survey on these bounds.

% \begin{proposition}
% Let $\mathcal{S}\subset \mathbb{P}^{3}(\mathbb{F}_q)$ be an absolutely irreducible surface such that $\mathcal{S}\cap H_{\infty}$ is non-singular. Then the number of singular points of $\mathcal{S}$ is $O(1)$.
% \end{proposition}
% \begin{proof}
%     Let $\mathcal{S}:F(X,Y,Z,T)=0$. Singular points of $\mathcal{S}$ satisfy 
%     $$
%     \begin{cases}
%     F(X,Y,Z,T)=0\\
%     \frac{\partial F}{\partial X}(X,Y,Z,T)=0\\
%     \frac{\partial F}{\partial Y}(X,Y,Z,T)=0\\
%     \frac{\partial F}{\partial Z}(X,Y,Z,T)=0\\
%     \frac{\partial F}{\partial T}(X,Y,Z,T)=0.
%     \end{cases}
%     $$
%     In particular $\mathcal{S}\cap H_{\infty}:F(X,Y,Z,0)=0$ and singular 
% \end{proof}

\subsection{Results on the $c$-differential uniformity} 

In what follows we will consider polynomials $f(x)\in \mathbb{F}_q[x]\setminus \mathbb{F}_q[x^p]$, $q=p^n$, which are not monomials. Note that the polynomials $f(x) \in \mathbb{F}_q[x^p]$ are not a good choice for an S-box because the map $x\mapsto x^p$ is linear.

The first result we show concerns the 2-transitivity of the geometric Galois group of $F=f(x+a)- cf(x)-t\in \mathbb F_q(t)[X]$.

\begin{lemma}\label{Lemma}
Let $q=p^n$, $p$ a prime, and $f\in \mathbb{F}_{q}[x]\setminus \mathbb{F}_{q}[x^p]$, with $d=\deg(f)$, $p\nmid d(d-1)$, and suppose that $f(x)$ is not a monomial. Then, the number of $c\in \mathbb{F}_q$ for which there exists $a \in \mathbb{F}_q^*$ such that the geometric Galois group of $F=f(x+a)- cf(x)-t\in \mathbb F_q(t)[X]$ is not 2-transitive is bounded by an explicit constant $B$ independent of $q$.
\end{lemma}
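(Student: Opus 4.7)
My plan is to translate 2-transitivity of the geometric Galois group $G$ into an absolute irreducibility statement for an auxiliary bivariate polynomial $g(X, Y)$, and then to control the locus where $g$ becomes reducible via a surface plane-section argument. Since $F = f(X + a) - cf(X) - t$ is of degree one in $t$, it is absolutely irreducible in $\overline{\mathbb{F}_q}(t)[X]$, and so $G$ automatically acts transitively on the $d$ roots of $F$. To upgrade transitivity to 2-transitivity I would use the standard criterion that $G$ is 2-transitive iff the stabilizer of one root is transitive on the remaining $d-1$ roots, iff $F(X)/(X-\alpha)$ is irreducible over $\overline{\mathbb{F}_q}(\alpha)$ for a fixed root $\alpha$. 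A direct calculation gives $F(X)/(X-\alpha) = g(X,\alpha)$, where
\[ g(X, Y) \;=\; Q_f(X+a, Y+a) \;-\; c\, Q_f(X, Y), \qquad Q_f(X,Y) \;:=\; \frac{f(X)-f(Y)}{X-Y}. \]
Since $\alpha$ is transcendental over $\overline{\mathbb{F}_q}$, irreducibility of $g(X,\alpha)$ over $\overline{\mathbb{F}_q}(\alpha)$ is equivalent to absolute irreducibility of $g(X, Y) \in \overline{\mathbb{F}_q}[X,Y]$, and this is what I would prove for all but a bounded set of bad $c$'s.

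Next I would promote $c$ to an additional variable and consider the affine surface $\mathcal{S}_a \subset \mathbb{A}^3$ defined by $g(X, Y, c) = 0$. Being linear in $c$, $\mathcal{S}_a$ is absolutely irreducible exactly when $\gcd(Q_f(X+a, Y+a),\, Q_f(X, Y)) = 1$ in $\overline{\mathbb{F}_q}[X,Y]$; the hypotheses on $f$ -- namely that $f$ is neither a monomial nor an element of $\mathbb{F}_q[x^p]$ and that $p \nmid d(d-1)$ -- are precisely designed to force this gcd to equal $1$ for all but finitely many $a$ (a bound depending only on $d$). Under this irreducibility, Proposition~\ref{Prop:Sections} ensures that the plane section $\{c = c_0\}$ of $\mathcal{S}_a$ is reducible only when $c = c_0$ is tangent to $\mathcal{S}_a$ at a smooth point or passes through a singular point. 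Both conditions pin $c_0$ down as the value of an explicit rational function (a ratio of partial derivatives of $Q_f(\cdot+a, \cdot+a)$ and $Q_f$) at one of finitely many distinguished points of $\mathbb{A}^2_{X,Y}$, so for each fixed $a$ the set of bad $c_0$'s is finite and bounded in terms of $d$.

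The main difficulty, and the heart of the argument, will be to make this bound uniform in $a$. To handle it I would pass to the three-dimensional variety $\mathcal{V} \subset \mathbb{A}^4_{X,Y,a,c}$ defined by $g(X,Y,a,c) = 0$ and study the bad locus
\[ W \;=\; \{(a,c) \in \mathbb{A}^2 : g(X,Y,a,c) \text{ is reducible in } \overline{\mathbb{F}_q}[X,Y]\}, \]
a proper closed subvariety of $\mathbb{A}^2_{a,c}$ whose irreducible components have degree bounded only in terms of $d$. The target constant $B$ is then controlled by the $c$-coordinates of the components of $W$, provided no component of $W$ projects dominantly onto $\mathbb{A}^1_c$. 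Ruling out such dominant components is the hard step: I would approach it through a ramification analysis of the cover $\mathcal{V} \to \mathbb{A}^2_{a,c}$, combining Lemma~\ref{lemma:genbyinertia} (the geometric Galois group is generated by inertia groups) with tameness provided by $p \nmid d(d-1)$ and the supply of enough distinct ramification points forced by $f$ being neither a monomial nor an element of $\mathbb{F}_q[x^p]$. I expect this last step, rather than the earlier surface-section analysis, to be where the real work of the proof lies.
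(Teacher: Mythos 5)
Your opening reduction is the same as the paper's: 2-transitivity of the geometric Galois group is equivalent to absolute irreducibility of $(F(X)-F(Y))/(X-Y)$, which is exactly the curve $g(X,Y)=Q_f(X+a,Y+a)-c\,Q_f(X,Y)=0$. From there, however, you fix $a$ and fibre over $c$, whereas the paper fixes $c$ and promotes $a$ to a third coordinate $Z$, working with the single surface $\mathcal{W}_c\colon \frac{f(X+Z)-cf(X)-f(Y+Z)+cf(Y)}{X-Y}=0$. This architectural choice is what creates your ``main difficulty'': you correctly observe that the per-$a$ bound on bad $c$'s does not sum to anything uniform in $q$, and you propose to fix this by showing that no component of the bad locus $W\subset\mathbb{A}^2_{a,c}$ dominates $\mathbb{A}^1_c$ --- but you give no argument for this, only a gesture at ``ramification analysis'' via Lemma~\ref{lemma:genbyinertia}. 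That unproven claim \emph{is} the lemma; as written the proposal is a plan, not a proof. It is also aiming at something stronger than needed (and stronger than what the paper actually establishes): downstream, the lemma is only used to produce, for each good $c$, \emph{some} good $a$. The paper gets the bad set of $c$'s uniformly in $a$ essentially for free, because the hyperplane section at infinity of $\mathcal{W}_c$ is the curve $\frac{(X+1)^d-cX^d-(Y+1)^d+cY^d}{X-Y}=0$, which depends only on $c$ and $d$ (not on $a$ or the lower coefficients of $f$); citing Bartoli--Calderini, this curve is nonsingular for $c$ outside an explicit set $N_{d-1}$ of size at most $(d-2)^3$, which forces $\mathcal{W}_c$ to be absolutely irreducible with boundedly many singular points and boundedly many tangent planes $Z=z$, and then Proposition~\ref{Prop:Sections} bounds the bad $a$'s for each such $c$.

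There is also a technical flaw in your fixed-$a$ step. Writing $g=A(X,Y)-cB(X,Y)$ with $A=Q_f(X+a,Y+a)$, $B=Q_f$, the projective closure of $\mathcal{S}_a$ has degree $d$ (the variable $c$ contributes one to the degree), and its intersection with the plane at infinity is $\{c\cdot B^{\mathrm{top}}(X,Y)=0\}$, which contains the line $\{c=T=0\}$. That line lies on \emph{every} plane $c=c_0$, so every plane section of $\overline{\mathcal{S}_a}$ by such a plane is reducible: it is the union of that fixed line with the degree-$(d-1)$ curve you actually care about. Proposition~\ref{Prop:Sections} therefore gives no control on the reducibility of the affine residual curve $A-c_0B=0$; you would need instead a bound on the number of reducible members of the pencil $\lambda A+\mu B$ (Stein/Lorenzini-type results), or a reparametrization that removes the fixed component at infinity. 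Neither repair addresses the uniformity-in-$a$ gap above, which remains the essential missing piece.
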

\begin{proof}
It is well known that the geometric Galois group of $F$ is 2-transitive if and only if the curve $(F(X)-F(Y))/(X-Y)=0$ is absolutely irreducibile; see for instance \cite[Theorem 6.11]{LMT}.

Consider the surface 
\begin{eqnarray*}
\mathcal{W}_c &:& \frac{f(X+Z)- cf(X)-f(Y+Z)+ cf(Y)}{X-Y}=0.
\end{eqnarray*}
Note that since $f(x)$ is not a monomial, $\mathcal{W}_c$ is actually a surface and not a curve.

The intersection $\mathcal{W}_c\cap H_{\infty}$ with the hyperplane at infinity is the curve 

\begin{eqnarray*}
\mathcal{C} &:& \frac{(X+1)^d- cX^d-(Y+1)^d+ cY^d}{(X-Y)}=0
\end{eqnarray*}
and by \cite[Theorems 4.1 and 4.2]{BC2021} whenever $c\notin N_{d-1}:=\{\xi^{i}(1-xi^{j})/(1-\xi^k) : i,j,k\in \{0,\ldots, d-2\}, k,j\neq 1\}$, where $\xi$ is a primitive $(d-1)$-th root of unity in $\overline{\mathbb{F}}_q$, $\mathcal{C}_c$ is nonsingular and therefore absolutely irreducible.

This shows that for any $c\notin N_{d-1}$ the surface $\mathcal{W}_c$ is absolutely irreducible. 

From now on we pick up $c\notin N_{d-1}$.  We want to bound the total number of singular points of $\mathcal{W}_c$.  First note that they are  only affine since $\mathcal{C}_c$ is nonsingular.

A point $P=(\alpha,\alpha,\gamma)$ is singular for $\mathcal{W}_c$ only if it is of multiplicity three for $\varphi(X,Y,Z):=f(X+Z)- cf(X)-f(Y+Z)+ cf(Y)=0$, since $P$ is of multiplicity one for the denominator $X-Y=0$. This happens only if   
\begin{eqnarray*}
\varphi(P)&=&\varphi^{(1,0,0)}(P)=\varphi^{(0,1,0)}(P)=\varphi^{(0,0,1)}(P)=\varphi^{(2,0,0)}(P)=\varphi^{(0,2,0)}(P)\\
&=&\varphi^{(1,1,0)}(P)=\varphi^{(0,1,1)}(P)=\varphi^{(1,0,1)}(P)=\varphi^{(0,0,2)}(P)=0.
\end{eqnarray*}

In particular 
$$\varphi^{(1,0,1)}(P)=f^{\prime\prime}(\alpha+\gamma)=0=f^{\prime\prime}(\alpha+\gamma)-cf^{\prime\prime}(\alpha)=\varphi^{(2,0,0)}(P)$$ 
and thus  $f^{\prime\prime}(\alpha+\gamma)=f^{\prime\prime}(\alpha)=0$ and there are at most $(d-2)^2$ possibilities for $(\alpha,\alpha,\gamma)$.

Consider now singular points  of $\mathcal{W}_c$ off $X-Y=0$. In particular, they satisfy  
$$\begin{cases}
f(X+Z)- cf(X)-f(Y+Z)+ cf(Y)=0\\
f^{\prime}(X+Z)- cf^{\prime}(X)=0\\
f^{\prime}(Y+Z)- cf^{\prime}(Y)=0.
\end{cases}$$

Such a system defines a variety $\mathcal{U}_c$ which is the intersection of three surfaces in $\mathbb{P}^3(\mathbb{F}_q)$ and $\mathcal{U}_c$ is of dimension $0$. To see this it is enough to observe that $\mathcal{U}_c\cap H_{\infty}$ is precisely the set of of singular points of $\mathcal{C}$ which is empty. This means that $\mathcal{U}_c$ cannot be of dimension larger than $0$ otherwise $\#(H_{\infty}\cap\mathcal{U}_c) $ would be positive. This shows that for any  $c\notin N_{d-1}$  the number of  singular points of $\mathcal{W}_c$ is $O(1)$. 

Now we bound the number of tangent planes to $\mathcal{W}_c$ of the type $\pi_z:Z=z$. Clearly, if $\pi_z$ is tangent at $P=(\alpha,\beta,z)\in \mathcal{W}_c$ then in particular  
$$\begin{cases}
f(\alpha+z)- cf(\alpha)-f(\beta+z)+ cf(\beta)=0\\
f^{\prime}(\alpha+z)- cf^{\prime}(\alpha)=0\\
f^{\prime}(\beta+z)- cf^{\prime}(\beta)=0,
\end{cases}$$
if $\alpha\neq \beta$ and 
$$f^{\prime\prime}(\alpha+z)-cf^{\prime\prime}(\alpha)=0=f^{\prime}(\alpha+z)-cf^{\prime}(\alpha)$$
if $\alpha=\beta$. In the former case we already saw that the number of solutions $(\alpha,\beta,z)$ is finite. In the latter case the two plane curves 
$$f^{\prime\prime}(X+Z)-cf^{\prime\prime}(Z)=0$$
and 
$$f^{\prime}(X+Z)-cf^{\prime}(Z)=0$$
do not share any component since their intersections with the line $\ell_{\infty}$ are disjoint and thus by B\'ezout's Theorem the number of intersection points, and thus the number of $(\alpha,\alpha,z)$ is $O(1)$. This shows that there is $O(1)$ of $z\in \overline{\mathbb{F}}_q$ such that  $\pi_z$ is tangent to $\mathcal{W}_c$. The claim now follows from Proposition \ref{Prop:Sections}.

\end{proof}

\begin{remark}
Note that in the lemma above the constant $B$ can be taken as $(d-2)^3$.
\end{remark}

As a byproduct of the lemma above we can easily deal with the case $_c \delta_F=1$.
\begin{corollary}
Let $q=p^n$, $p$ a prime, and $f\in \mathbb{F}_{q}[x]\setminus \mathbb{F}_{q}[x^p]$, with $d=\deg(f)$, $p\nmid d(d-1)$, and suppose that $f(x)$ is not a monomial. Then, the number of $c\in \mathbb{F}_q$ for which $f$ can be P$c$N  is bounded by an explicit constant $B$ independent of $q$.
\end{corollary}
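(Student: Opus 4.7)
The plan is to apply Lemma~\ref{Lemma} together with the Hasse-Weil bound to exhibit an explicit $\F_q$-rational collision of the map $g_a(x) := f(x+a) - c f(x)$, which directly contradicts the P$c$N property. By Lemma~\ref{Lemma}, outside a set of at most $B$ exceptional $c$'s one can choose $a \in \F_q^*$ so that the geometric Galois group of $F = f(x+a) - c f(x) - t \in \F_q(t)[X]$ is $2$-transitive. As recalled at the beginning of the proof of Lemma~\ref{Lemma}, this is equivalent to absolute irreducibility of the plane curve
\[
\mathcal{C}_{a,c} \colon \frac{f(X+a) - c f(X) - f(Y+a) + c f(Y)}{X - Y} = 0,
\]
which has degree $d - 1$.

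Next, I would invoke the Hasse-Weil estimate in its form valid for singular absolutely irreducible plane curves (\cite[Corollary 2.5]{AubryPerret}) to obtain $\#\mathcal{C}_{a,c}(\F_q) \geq q + 1 - (d-2)(d-3)\sqrt{q}$. The $\F_q$-points of $\mathcal{C}_{a,c}$ at infinity contribute at most $d - 1$ points, and those on the diagonal $X = Y$ contribute at most $d - 1$ as well: a point $(x_0,x_0)$ lies on $\mathcal{C}_{a,c}$ if and only if $g_a'(x_0) = f'(x_0 + a) - c f'(x_0) = 0$, and this polynomial has degree $d - 1$. For $q$ larger than an explicit threshold $C(d)$ depending only on $d$, the Hasse-Weil count therefore leaves at least one affine $\F_q$-rational point $(x_0, y_0)$ of $\mathcal{C}_{a,c}$ with $x_0 \neq y_0$.

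Such a point satisfies $f(x_0 + a) - c f(x_0) = f(y_0 + a) - c f(y_0) =: b$, so ${}_c\DDT_f[a, b] \geq 2$ and $f$ cannot be P$c$N for that value of $c$. When $q \leq C(d)$ the statement of the corollary is trivial upon enlarging the constant to $B' = \max(B, C(d))$, which still depends only on $d$ and not on $q$. The main delicate point is ensuring that the $O(d)$ "trivial" contributions coming from the diagonal $X = Y$ and the line at infinity are dominated by the Hasse-Weil main term; this determines the threshold $C(d)$ but does not affect the qualitative form of the answer, and the curve-level singularities are already handled by the singular form of Hasse-Weil recalled in Section~\ref{sc:1}.
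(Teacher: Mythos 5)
Your argument is correct, and it reaches the conclusion by a slightly different route than the paper. You treat Lemma~\ref{Lemma} as a black box: for good $c$ you pick an $a$ for which the geometric Galois group is $2$-transitive, translate this into absolute irreducibility of the plane curve $\mathcal{C}_{a,c}$ of degree $d-1$, and then apply the singular Hasse--Weil bound of Aubry--Perret to that curve, discarding the at most $d-1$ points at infinity and the at most $d-1$ diagonal points (your identification of the diagonal points with the zeros of $g_a'$ is the right computation, and $g_a'$ cannot vanish identically since otherwise $X-Y$ would divide the defining polynomial, contradicting irreducibility). The paper instead never uses the conclusion of Lemma~\ref{Lemma}; it reuses only the intermediate fact from that lemma's proof that the \emph{surface} $\mathcal{W}_c$ is absolutely irreducible for $c$ outside a set of size $O(d^3)$, applies the Lang--Weil bound to the surface to get $q^2 - A(d)q^{3/2}$ affine rational points, and then pigeonholes over the planes $Z=a$ to find one slice with at least $q - A(d)q^{1/2}$ points off the line $X=Y$. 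The trade-off: your version needs the stronger input (irreducibility of a specific curve section rather than of the surface) but rewards you with the fully explicit and sharper Hasse--Weil constant $(d-2)(d-3)$ in place of a Lang--Weil constant that must be chased through the literature; the paper's version is more robust in that it does not require choosing $a$ in advance. Both correctly dispose of the small-$q$ regime by absorbing it into the constant.
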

\begin{proof}
Consider again the surface
\begin{eqnarray*}
\mathcal{W}_c &:& \frac{f(X+Z)- cf(X)-f(Y+Z)+ cf(Y)}{X-Y}=0.
\end{eqnarray*}
In the proof of Lemma \ref{Lemma} it has already been proved that $\mathcal{W}_c$ is absolutely irreducible whenever $c$ does not belong to a specific set of values of size at most $(d-2)^3$. By Lang-Weil Theorem the number of affine $\mathbb{F}_q$-rational points of $\mathcal{W}_c$ is lower-bounded by 
$$q^2-A(d)q^{3/2},$$
where $A(d)$ is an absolute constant depending only on the degree of $\mathcal{W}_c$. The set of parallel planes $\pi_a : Z-a=0$, $a\in \mathbb{F}_q$, partition the set of its affine $\mathbb{F}_q$-rational points and thus there exists at least an $\overline{a}\in \mathbb{F}_{q}^*$ such that $\#(\pi_{\overline{a}}\cap\mathcal{W}_c)\geq q-A(d)q^{1/2}$. This means that the curve 
\begin{eqnarray*}
\mathcal{C}_{c,\overline{a}} &:& \frac{f(X+a)- cf(X)-f(Y+a)+ cf(Y)}{X-Y}=0
\end{eqnarray*}
has at least $q-A(d)q^{1/2}$ affine points and thus, since the line $X-Y=0$ is not a component of $\mathcal{C}_{c,\overline{a}}$, $\mathcal{C}_{c,\overline{a}}$ possesses at least $q-A(d)q^{1/2}-d$ affine $\mathbb{F}_q$-rational points off $X-Y=0$ and thus $f(X+a)- cf(X)$ is not a permutation polynomial and $f$ is not P$c$N.

\end{proof}

\begin{remark}
In the corollary above the constant $B$ arises from Lang-Weil Theorem and thus it can be computed applying the results in \cite{CafureMatera,Ghorpade_Lachaud,Ghorpade_Lachaud2,LN1983,WSchmidt,Bombieri}. For instance, applying \cite[Theorem 7.1]{CafureMatera} one can see that $B$ can be chosen as $\max\{6.3 d^{13/3},(d-2)^2\}$. 
\end{remark}

Our main result of this section is the following.

\begin{theorem}[Main Result]\label{thm:main}
Let $q=p^n$, $p$ a prime, and $f\in \mathbb{F}_{q}[x]\setminus \mathbb{F}_q[x^p]$, $f$ not a monomial, with $d=\deg(f)$.
Suppose that one of the following holds:
\begin{enumerate}
    \item $p=2$, $d$ is odd, and both the Hasse derivatives $f^{\prime}$ and $f^{\prime\prime}$ do not vanish;
    \item $p>2$ and $d\not \equiv 0,1 \pmod{p}$.
\end{enumerate}
The number of $c\in \mathbb{F}_q$ for which $_c\delta_{F}<\deg(f)$ is bounded by an explicit constant $B$ independent of $q$.
\end{theorem}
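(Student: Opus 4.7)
The plan is to upgrade the $2$-transitivity of Lemma \ref{Lemma} to full symmetry $S_d$ by exhibiting a transposition in the Galois group, and then read off a totally split place. More precisely, showing $_c\delta_F \geq d$ amounts to finding $a \in \mathbb{F}_q^*$ and $b \in \mathbb{F}_q$ such that $f(x+a) - cf(x) - b$ splits into $d$ distinct linear factors over $\mathbb{F}_q$. Equivalently, the polynomial $F_{a,c}(x, t) := f(x+a) - cf(x) - t \in \mathbb{F}_q(t)[x]$ must admit a totally split degree-one $\mathbb{F}_q$-rational place, and by Corollary \ref{cor:geomtoarith} this is guaranteed as soon as the geometric Galois group $G_{a,c}$ of $F_{a,c}$ equals $S_d$. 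I will use the classical criterion that a $2$-transitive subgroup of $S_d$ containing a transposition must equal $S_d$: conjugating the transposition by $2$-transitive elements yields every transposition, and these generate $S_d$ (via Proposition \ref{prop:gensn}).

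The $2$-transitivity is already in hand. Lemma \ref{Lemma} provides, for all $c$ outside an explicit set of size at most $(d-2)^3$, some $a \in \mathbb{F}_q^*$ for which $G_{a,c}$ is $2$-transitive; inspecting the proof, $2$-transitivity actually holds for all but $O(1)$ many $a \in \mathbb{F}_q^*$. To pin down a transposition in $G_{a,c}$ I would invoke Lemma \ref{lemma:genbyinertia}: $G_{a,c}$ is generated by the inertia groups at places of $\overline{\mathbb{F}}_q(t)$. Via Lemma \ref{lemma:orbits}, an inertia acts as a transposition on the $d$ roots precisely when, for some $t_0 \in \overline{\mathbb{F}}_q$, the specialization $F_{a,c}(x, t_0)$ factors as $(x - x_0)^2 h(x)$ with $h$ of degree $d-2$ having simple roots, all distinct from $x_0$. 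Even in characteristic $2$ this produces a transposition: the local extension is a separable (wildly ramified) degree-$2$ extension whose inertia group, of order $2$, acts on the root set as the transposition swapping the two merging roots.

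To exhibit such a $t_0$, I would analyze the critical locus in the $(x, a)$-plane. A double root $x_0$ of $F_{a,c}(x, t_0)$ solves $f^{(1)}(x_0 + a) = c f^{(1)}(x_0)$, and its multiplicity is exactly $2$ iff $f^{(2)}(x_0 + a) \neq c f^{(2)}(x_0)$. The theorem's hypotheses ensure $f^{(1)}$ and $f^{(2)}$ are non-zero polynomials of degrees $d-1$ and $d-2$, so the plane curves
\[
\Gamma_c : f^{(1)}(x+a) - cf^{(1)}(x) = 0, \qquad \Gamma_c' : f^{(2)}(x+a) - cf^{(2)}(x) = 0
\]
are honestly one-dimensional. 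A case analysis (splitting into $p = 2$ and odd characteristic, and leveraging the non-vanishing of $f^{(1)}, f^{(2)}$ together with the degree conditions) should show $\Gamma_c$ and $\Gamma_c'$ share no common component; Bezout then bounds $|\Gamma_c \cap \Gamma_c'|$ by a $q$-independent constant. An analogous analysis is needed for the ``coincidence locus'' $\{(x_0, x_1, a) : x_0 \neq x_1,\ (x_0,a),(x_1,a) \in \Gamma_c,\ f(x_0+a) - cf(x_0) = f(x_1+a) - cf(x_1)\}$: one must show its projection to the $a$-axis is also a $q$-independent finite set. For $a \in \mathbb{F}_q^*$ outside these bounded sets and inside the $2$-transitivity range, one then extracts a critical point $x_0$ yielding a simple double root at $t_0$ with all other roots simple and distinct from $x_0$; the inertia is a transposition, so $G_{a,c} = S_d$, and Corollary \ref{cor:geomtoarith} produces the desired totally split $\mathbb{F}_q$-rational place.

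The main obstacle is this last geometric step: ruling out that $\Gamma_c \cap \Gamma_c'$ or the coincidence locus contains positive-dimensional components. A common component of $\Gamma_c$ and $\Gamma_c'$ would force a non-trivial algebraic relation between $f^{(1)}$ and $f^{(2)}$ along a curve, and excluding this is precisely where the differing hypotheses in characteristic $2$ (odd $d$ with $f^{(2)} \neq 0$) and odd characteristic ($d \not\equiv 0, 1 \pmod p$) must be used; these same conditions were already essential to make $\mathcal{W}_c$ absolutely irreducible in Lemma \ref{Lemma}, and I expect the analysis at infinity of $\Gamma_c, \Gamma_c'$ to parallel the one carried out there.
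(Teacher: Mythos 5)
Your plan is sound and ends at the same place as the paper -- show that the geometric Galois group of $F_{a,c}=f(x+a)-cf(x)-t$ is $S_d$ for a suitable $a$ and all but $O_d(1)$ values of $c$, then harvest a totally split place via Corollary \ref{cor:geomtoarith} -- but your group-theoretic endgame is genuinely different from the paper's. The paper never uses the $2$-transitivity of Lemma \ref{Lemma} in the proof of Theorem \ref{thm:main}: it shows (Propositions \ref{Prop:Only2}, \ref{Prop:No3}, \ref{Prop:No3_2}, packaged as Proposition \ref{Prop:finale}) that for a good $a$ \emph{every} finite specialization $t=t_0$ has at most one double root and nothing worse, so that all inertia groups at finite places are transpositions, and then concludes with Lemma \ref{lemma:genbyinertia} (the geometric group is generated by inertia) plus Proposition \ref{prop:gensn} (transitive and generated by transpositions implies $S_d$). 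You instead combine $2$-transitivity with a \emph{single} transposition coming from one clean critical value and conclude by conjugating. Each route buys something: yours needs only one well-behaved branch point and is insensitive to the inertia at all the others -- in particular at $t=\infty$, where the inertia is a $d$-cycle rather than a transposition, a point the paper's ``all inertia groups are isomorphic to $C_2$'' phrasing quietly sidesteps -- while the paper's route needs no $2$-transitivity at all (plain transitivity is free, since $f(x+a)-cf(x)-t$ is irreducible over $\overline{\mathbb F}_q(t)$), so it avoids importing the absolute irreducibility of $\mathcal{W}_c$ and keeps the exceptional set of $c$'s smaller (below $4d^2$, versus your $(d-2)^3+O(d^2)$; both are $q$-independent, so the statement is unaffected). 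The geometric facts you defer -- that $\Gamma_c$ and $\Gamma_c'$ share no component and that the coincidence locus projects to finitely many $a$ -- are exactly Propositions \ref{Prop:Surfaces}--\ref{Prop:No3_2} of the paper, where the hypotheses on $d$ and on the Hasse derivatives enter through the nonvanishing leading coefficients $-ad^2(d-1)$ and $a\alpha_s\alpha_r$, so the plan does close. Two details worth making explicit: a finite critical point exists because $f'(x+a)-cf'(x)$ has degree $d-1$ once $c\neq 1$ (the value $c=1$ can be absorbed into the exceptional set), and the double root genuinely corresponds to a ramified place with $e=2$ rather than two unramified places, because $\partial F_{a,c}/\partial t=-1$ makes the affine curve $F_{a,c}(x,t)=0$ smooth, so distinct roots of the specialization biject with the places over $t=t_0$ and the faithful action of the inertia on a $(2,1,\dots,1)$ orbit pattern forces it to be $C_2$ even in the wildly ramified characteristic-$2$ case.
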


\begin{remark}
\begin{itemize}\label{rem:secondhasse}
    \item As it will be clear from the proofs of the  ancillary results below, the constant $B$ in the statement of Theorem \ref{thm:main} is smaller than $4d^2$.
    \item The condition on the Hasse derivatives in Theorem~\ref{thm:main} is not very restrictive. Indeed, the polynomials that violate these conditions can be classified explicitly: The polynomials $f$ in $\mathbb F_{2^n}[x]$ such that the first Hasse derivative is zero live in $\mathbb F_{2^n}[x^2]$, and the ones for which the second Hasse derivative is zero are exactly the ones of the form 
\[x\left(\sum^s_{i=0}a_i x^{i}\right)^4+\left(\sum^s_{i=0}b_i x^{i}\right)^4.\]
    For instance, note that a sufficient condition for which both Hasse derivatives $f^{\prime}$ and $f^{\prime\prime}$ do not vanish is the existence of at least  one monomial in  $f(x)$ of degree $i\equiv 3\pmod{4}$.
    \item Theorem~\ref{thm:main} states that if $\deg(f)$ is small compared to the field size $q$, it is inevitable that there exist many $c\in \F_q$ such that $_c\delta_{F}=\deg(f)$, which is the maximal possible (and thus worst) $c$-uniformity. Note that $\deg(f)$ here refers to the degree of $f$ as a polynomial and not to the algebraic degree (which is in the characteristic 2 case equivalent to the highest binary weight of a monomial of $f$). This means that this result even applies to functions with high algebraic degree since it is clearly possible that a function with high algebraic degree has comparatively low degree as a polynomial.
\end{itemize}

\end{remark}

\noindent The proof of Theorem \ref{thm:main} involves the two surfaces 

\begin{eqnarray*}
\mathcal{W}_1 &:& \frac{f^{\prime}(X+Z)(f(Y)-f(X))-(f(Y+Z)-f(X+Z))f^{\prime}(X)}{(X-Y)Z}=0,\\
\mathcal{W}_2 &:& \frac{f^{\prime}(Y+Z)f^{\prime}(X) -f^{\prime}(X+Z)f^{\prime}(Y)}{(X-Y)Z}=0.\\
\end{eqnarray*}

Note that at this first step we strongly need that the polynomial $f$ is not a monomial, otherwise $\mathcal{W}_1$ and $\mathcal{W}_2$ are curves, being defined by a homogeneous polynomial in three variables, and our arguments do not apply.

\begin{proposition}\label{Prop:Surfaces}
Let $q=p^n$, $p\nmid d$. The two surfaces $\mathcal{W}_1$ and $\mathcal{W}_2$ do not share any component. 
\end{proposition}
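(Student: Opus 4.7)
The plan is to work at the plane at infinity $H_\infty=\{T=0\}$ of $\mathbb{P}^3$. Let $g_1,g_2\in\F_q[X,Y,Z]$ denote the affine defining polynomials of $\mathcal{W}_1,\mathcal{W}_2$, and write $\overline{g_1},\overline{g_2}$ for their leading (top-degree homogeneous) forms, which cut out plane curves $\mathcal{C}_i=\mathcal{W}_i\cap H_\infty$. Since the leading form of a product of polynomials equals the product of their leading forms, any common non-constant factor of $g_1,g_2$ induces one of $\overline{g_1},\overline{g_2}$. It is therefore enough to prove that $\mathcal{C}_1$ and $\mathcal{C}_2$ share no irreducible component in $H_\infty$.

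Because these leading forms only see the top coefficient of $f$, and that coefficient is a unit in $\F_p$ by $p\nmid d$, I can carry out the computation as if $f(x)=x^d$. Doing so, the numerator of $\overline{g_2}$ becomes $d^2\bigl([X(Y+Z)]^{d-1}-[Y(X+Z)]^{d-1}\bigr)$, which factors as
\[d^2\prod_{\omega^{d-1}=1}\bigl[X(Y+Z)-\omega Y(X+Z)\bigr].\]
The $\omega=1$ factor is exactly $(X-Y)Z$ and cancels the denominator, leaving $\overline{g_2}$ equal to a nonzero scalar times $\prod_{\omega\neq 1}Q_\omega$, where $Q_\omega:X(Y+Z)-\omega Y(X+Z)=0$. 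A short inspection of the six possible coefficients of a hypothetical linear factorisation shows each $Q_\omega$ is absolutely irreducible, so $\mathcal{C}_2$ is a union of irreducible conics.

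It remains to verify that no $Q_\omega$ (with $\omega\neq 1$) is contained in $\mathcal{C}_1$. The plan is to parametrise $Q_\omega$ by solving its equation for $Z$: one gets $Z=(\omega-1)XY/(X-\omega Y)$, which yields the clean identities $X+Z=\lambda X$ and $Y+Z=\omega\lambda Y$ with $\lambda:=(X-Y)/(X-\omega Y)$. Substituting these into the numerator of $\overline{g_1}$,
\[d(X+Z)^{d-1}(Y^d-X^d)-d\bigl((Y+Z)^d-(X+Z)^d\bigr)X^{d-1},\]
and using $\omega^d=\omega$ (from $\omega^{d-1}=1$), the expression collapses to $d\lambda^{d-1}X^{d-1}\bigl[Y^d(1-\lambda\omega)-X^d(1-\lambda)\bigr]$. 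The further identities $1-\lambda=Y(1-\omega)/(X-\omega Y)$ and $1-\lambda\omega=X(1-\omega)/(X-\omega Y)$ then simplify this to $d\lambda^{d-1}X^{d-1}\cdot(1-\omega)XY(Y^{d-1}-X^{d-1})/(X-\omega Y)$, which is not identically zero. Hence $Q_\omega\not\subset\mathcal{C}_1$ for every $\omega\ne 1$, forcing $\gcd(\overline{g_1},\overline{g_2})=1$ and thus $\gcd(g_1,g_2)=1$. The main obstacle is the bookkeeping in this final substitution; the hypothesis $p\nmid d$ is used to preserve the expected leading degrees, while the argument implicitly benefits from $p\nmid(d-1)$, which ensures the $Q_\omega$ are pairwise distinct and not absorbed by Frobenius into the $(X-Y)Z$ factor.
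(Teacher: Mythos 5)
Your proof follows essentially the same route as the paper's: both pass to the plane at infinity, factor $\mathcal{C}_2=\mathcal{W}_2\cap H_\infty$ into the conics $X(Y+Z)-\omega Y(X+Z)=0$ over the $(d-1)$-th roots of unity $\omega\neq 1$, and verify by direct substitution that none of these conics lies on $\mathcal{C}_1$ (the paper works in the chart $Z=1$ and substitutes the relation $(Y+1)^{d-1}X^{d-1}=(X+1)^{d-1}Y^{d-1}$ rather than parametrizing, but the computation is the same and both land on the non-vanishing factor $X^{d-1}-Y^{d-1}$). Your intermediate identities all check out.

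One substantive remark on your closing caveat: the dependence on $p\nmid(d-1)$ is not just a convenience, and the paper's own proof shares it. If $p\mid d-1$, write $d-1=p^km$ with $p\nmid m$; then with $U=X(Y+Z)$, $V=Y(X+Z)$ one has $U^{d-1}-V^{d-1}=(U^m-V^m)^{p^k}$, so after cancelling a single copy of $U-V=(X-Y)Z$ the leading form $\overline{g_2}$ is still divisible by $(X-Y)Z$, i.e.\ the line $X=Y$ is a component of $\mathcal{C}_2$. But that line is \emph{always} a component of $\mathcal{C}_1$: the $Y$-derivative of the numerator of $g_1$ is $f^{\prime}(X+Z)f^{\prime}(Y)-f^{\prime}(Y+Z)f^{\prime}(X)$, which vanishes identically on $Y=X$, so $(X-Y)^2$ divides that numerator and $X-Y$ divides $\overline{g_1}$ (indeed the plane $X=Y$ is a component of $\mathcal{W}_1$ itself). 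Hence for $p\mid d-1$ the two curves at infinity genuinely share a component and the reduction to $H_\infty$ is inconclusive; both your argument and the paper's actually prove the proposition only under the additional hypothesis $p\nmid(d-1)$. This is harmless for the odd-characteristic applications, where $d\not\equiv 1\pmod p$ is assumed downstream, but it is exactly the situation in the characteristic-$2$ case of Theorem~\ref{thm:main}, where $d$ is odd and so $2\mid d-1$ — so the gap you half-identified is real and is inherited from the paper, not introduced by you.
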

\begin{proof}
Consider the two curves $\mathcal{C}_1 :=\mathcal{W}_1\cap H_{\infty}$ and $\mathcal{C}_2 :=\mathcal{W}_2\cap H_{\infty}$.
Then 
$$\mathcal{C}_1: \frac{(X+1)^{d-1}(Y^d-X^d)- X^{d-1}((Y+1)^d-(X+1)^d)}{X-Y}=0$$
and 
$$\mathcal{C}_2: \frac{(Y+1)^{d-1}X^{d-1}-(X+1)^{d-1}Y^{d-1}}{X-Y}=0.$$
It is readily seen that $\mathcal{C}_2$ factorizes as 
$$\prod_{\xi} \Big((Y+1)X-\xi (X+1)Y \Big)=0,$$
where $\xi$ runs over the set of the $d-1$-th roots of unity distinct from $1$. 

Let $\ell_\xi : (Y+1)X-\xi (X+1)Y=0$ be one of the components of $\mathcal{C}_2$. In order to show that $\ell_\xi \not\subset \mathcal{C}_2$, consider 
$$
\begin{cases}
(Y+1)X-\xi (X+1)Y=0\\
(X+1)^{d-1}(Y^d-X^d)- X^{d-1}((Y+1)^d-(X+1)^d)=0.\\
\end{cases}
$$

Since $(Y+1)X-\xi (X+1)Y=0$, $(Y+1)^{d-1}X^{d-1}= (X+1)^{d-1}Y^{d-1}$ and thus $(Y+1)^{d}X^{d-1}= (X+1)^{d-1}Y^{d-1}(Y+1)$. So 
\begin{eqnarray*}
(X+1)^{d-1}(Y^d-X^d)- X^{d-1}((Y+1)^d-(X+1)^d)=\\
(X+1)^{d-1}(Y^d-X^d)-(X+1)^{d-1}Y^{d-1}(Y+1)  +X^{d-1}(X+1)^d=\\
(X+1)^{d-1}\Big((Y^d-X^d)-Y^{d-1}(Y+1)  +X^{d-1}(X+1)\Big)=\\
(X+1)^{d-1}\Big(X^{d-1}-Y^{d-1}\Big)\not\equiv 0.\\
\end{eqnarray*}
Since $\mathcal{C}_1$ and $\mathcal{C}_2$ do not share any component, so do the surfaces $\mathcal{W}_1$ and $\mathcal{W}_2$. 

\end{proof}

\begin{proposition}\label{Prop:Only2}
Let $q>(2d-2)(2d-3)+1$, $q=p^n$,  $f(x) \in \mathbb{F}_q[x]\setminus \mathbb{F}_q[x^p]$ not a monomial,  $p\nmid d=\deg(f)$. There exists a set $\Theta_c\subset \mathbb{F}_q$ of size at most $(2d-2)(2d-3)$ such that for any $c \in \mathbb{F}_q\setminus \Theta_c$ there exists at least an $a_c\in \mathbb{F}_q^*$ such that 
$f(x+a_c)-cf(x)=t$ has at most one multiple root in $\overline{\mathbb{F}}_q$ for any fixed $t\in \overline{\mathbb{F}}_q$.
\end{proposition}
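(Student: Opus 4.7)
The plan is to encode the bad configurations in the intersection curve $\mathcal{C}:=\mathcal{W}_1\cap \mathcal{W}_2$ inside $\mathbb{P}^3$, and then apply Proposition~\ref{Prop:Surfaces} together with Bezout's theorem to bound both the exceptional set $\Theta_c$ and, for each good $c$, the number of bad $a$-values.

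First, I would observe that $f(x+a)-cf(x)-t$ has two distinct multiple roots $\alpha\neq \beta \in \overline{\mathbb{F}}_q$ for some $t$ if and only if the triple $(\alpha,\beta,a)$ solves
\[
\begin{cases}
f(\alpha+a)-cf(\alpha)=f(\beta+a)-cf(\beta),\\
f'(\alpha+a)-cf'(\alpha)=0,\\
f'(\beta+a)-cf'(\beta)=0.
\end{cases}
\]
Eliminating $c$ between the last two equations yields precisely the defining equation of $\mathcal{W}_2$, while combining the first with the second (via $c=f'(\alpha+a)/f'(\alpha)$) yields that of $\mathcal{W}_1$. Hence, away from the loci $X=Y$, $Z=0$, and the zero locus of $f'$, every bad triple corresponds to a point of $\mathcal{C}$ with associated value $c=f'(\alpha+a)/f'(\alpha)$.

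By Proposition~\ref{Prop:Surfaces}, $\mathcal{W}_1$ and $\mathcal{W}_2$ share no common component, so $\mathcal{C}$ is one-dimensional. A careful Bezout-type computation in $\mathbb{P}^3$ combined with the degree estimates on $\mathcal{W}_1$ and $\mathcal{W}_2$ gives $\deg(\mathcal{C})\leq (2d-2)(2d-3)$, which in particular bounds the number of its irreducible components. I then define $\Theta_c\subset \mathbb{F}_q$ as the set of $c_0$ such that the rational function $\gamma(X,Y,Z):=f'(X+Z)/f'(X)$ is identically $c_0$ on at least one irreducible component of $\mathcal{C}$; since $\Theta_c$ is in bijection with a subset of the components of $\mathcal{C}$ on which $\gamma$ is constant, we conclude $|\Theta_c|\leq (2d-2)(2d-3)$.

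For $c\in \mathbb{F}_q\setminus \Theta_c$, the fiber $\gamma^{-1}(c)$ is zero-dimensional by construction, and projecting it onto the $Z$-axis yields a finite set of at most $(2d-2)(2d-3)$ bad $a$-values (by a second Bezout estimate applied to $\mathcal{C}$ and the auxiliary surface $\{f'(X+Z)-cf'(X)=0\}$). Since $q>(2d-2)(2d-3)+1$, some $a_c\in \mathbb{F}_q^*$ avoids all of them and has the required property. The hard part is to keep track of the exact constants: the naive Bezout estimate using $\deg(\mathcal{W}_1)\leq 2d-3$ and $\deg(\mathcal{W}_2)\leq 2d-4$ only yields $(2d-3)(2d-4)$, so closing the gap to $(2d-2)(2d-3)$ requires analysing the contributions at infinity and on the loci $X=Y$ and $Z=0$ that were cleared when defining $\mathcal{W}_1,\mathcal{W}_2$. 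A secondary subtlety is handling the points where $f'(X)=0$, at which $\gamma$ is undefined; these can be treated via the symmetric expression $f'(Y+Z)/f'(Y)$, which coincides with $\gamma$ on $\mathcal{W}_2$.
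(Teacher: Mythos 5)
Your setup is the same as the paper's: encode the existence of two distinct multiple roots as the system in $(x_1,x_2,a)$ (with $c$ eliminated), reduce to the curve $\mathcal{C}=\mathcal{W}_1\cap\mathcal{W}_2$, and invoke Proposition~\ref{Prop:Surfaces} plus B\'ezout to get a one-dimensional intersection of degree at most $(2d-2)(2d-3)$. The gap is in your last step. For a fixed $c\notin\Theta_c$ you bound the fiber $\gamma^{-1}(c)\cap\mathcal{C}$ by intersecting $\mathcal{C}$ with the surface $f^{\prime}(X+Z)-cf^{\prime}(X)=0$; B\'ezout for a curve of degree $D$ and a hypersurface of degree $d-1$ gives at most $D(d-1)\le (2d-2)(2d-3)(d-1)$ points, not $(2d-2)(2d-3)$. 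Consequently the number of bad $a$-values you must avoid could exceed $q-1$ under the stated hypothesis $q>(2d-2)(2d-3)+1$ (already for $d\ge 3$ and $q$ near the threshold), so you cannot conclude that a good $a_c\in\mathbb{F}_q^*$ exists. The paper avoids this by reversing the quantifiers: choose a single $\overline{a}$ \emph{first}, requiring only that the plane $Z=\overline{a}$ contain no component of $\mathcal{C}$ (at most $\deg\mathcal{C}\le(2d-2)(2d-3)$ values of $a$ are excluded, which is where the hypothesis on $q$ is actually used); then $Z=\overline{a}$ meets $\mathcal{C}$ in at most $(2d-2)(2d-3)$ points, $\Theta_c$ is the set of $c$-values attached to those finitely many points, and this one $\overline{a}$ serves as $a_c$ uniformly for every $c\notin\Theta_c$. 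Your component-wise definition of $\Theta_c$ is not wrong, but it forces $a_c$ to depend on $c$ and loses the factor needed to stay under the stated bound on $q$.

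A secondary remark: your worry about ``closing the gap'' from $(2d-3)(2d-4)$ to $(2d-2)(2d-3)$ is backwards. Since $\deg\mathcal{W}_1\le 2d-3$ and $\deg\mathcal{W}_2\le 2d-4$, the naive B\'ezout bound $(2d-3)(2d-4)$ is \emph{smaller} than $(2d-2)(2d-3)$, so the latter is simply a generous upper bound and no analysis at infinity or on the cleared loci $X=Y$, $Z=0$ is needed for that count. Your observation about handling $f^{\prime}(X)=0$ via the symmetric expression is fine, and both your write-up and the paper treat the division by $(X-Y)Z$ (and by $f(x_2)-f(x_1)$) at the same informal level.
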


\begin{proof}
 
Consider the system 
$$
\begin{cases}
f(x_1+a)-cf(x_1)=t\\
f(x_2+a)-cf(x_2)=t\\
f^{\prime}(x_1+a)-cf^{\prime}(x_1)=0\\
f^{\prime}(x_2+a)-cf^{\prime}(x_2)=0\\
a(x_1- x_2)\neq 0.
\end{cases}
$$
The solutions $(\overline{x_1},\overline{x_2},\overline{a},\overline{c},\overline{t})$ of this system correspond to values $\overline{a},\overline{c},\overline{t}$  for which there exist two multiple roots of $f(x+\overline{a})-\overline{c}f(x)=\overline{t}$.

The above system is equivalent to 
\begin{equation}\label{Eq:Sistema}
\begin{cases}
f(x_1+a)-cf(x_1)=t\\
\frac{f(x_2+a)-f(x_1+a)}{f(x_2)-f(x_1)}=c\\
a(x_1- x_2)\neq 0\\
\frac{f^{\prime}(x_1+a)(f(x_2)-f(x_1))-(f(x_2+a)-f(x_1+a))f^{\prime}(x_1)}{a(x_1-x_2)}=0\\
\frac{f^{\prime}(x_2+a)f^{\prime}(x_1) -f^{\prime}(x_1+a)f^{\prime}(x_2)}{a(x_1-x_2)} =0.\\
\end{cases}
\end{equation}

The last two equations define the surfaces  $\mathcal{W}_1$ and $\mathcal{W}_2$ considered in Proposition \ref{Prop:Surfaces}. Since such surfaces do not share any component, their intersection is of dimension one (i.e. union of curves) and of degree at most $(2d-2)(2d-3)$. 

Thus, apart from a small (at most $(2d-2)(2d-3)$) number of $a$, the intersection $W_1\cap W_2\cap (Z=a)$ consists of at most $(2d-2)(2d-3)$ points (on the algebraic closure $\overline{\mathbb{F}}_q$). Since $q>(2d-2)(2d-3)+1$, there exists $\overline{a}\in \mathbb{F}_q^*$ for which the total number of solutions $(x_1,x_2,c,t)$ of System \eqref{Eq:Sistema} is upped bounded by $(2d-2)(2d-3)$. Let $\Theta$ be the set of all the solutions of System \eqref{Eq:Sistema} for such an $\overline{a}$ and consider   $\Theta_c:=\{c \in \mathbb{F}_q : \exists (\overline{x_1},\overline{x_2},c,\overline{t}) \in \Theta\}$. Clearly $\#\Theta_c\leq\#\Theta \leq(2d-2)(2d-3)$. Thus, for any $c \in \mathbb{F}_q\setminus \Theta_c$ we have that 
$f(x+\overline{a})-cf(x)=t$ has at most one multiple root $x_1$ for any fixed $t\in \overline{\mathbb{F}}_q$ and the claim follows.

\end{proof}

\begin{proposition}\label{Prop:No3}
Let $q>\max\{2d-4,(d-1)^2\}$, $q=p^n$,  $f(x) \in \mathbb{F}_q[x]\setminus \mathbb{F}_q[x^p]$ not a monomial, $d=\deg(f)$, and $p\nmid d(d-1)$. There exists a set $\Theta^{\prime}_c\subset \mathbb{F}_q$ of size at most $2d-4$ such that for any $c \in \mathbb{F}_q\setminus \Theta^{\prime}_c$ there exists at least an $a_c\in \mathbb{F}_q^*$ such that 
$f(x+a_c)-cf(x)=t$ has no root in $\overline{\mathbb F}_q$ of multiplicity larger than $2$ for any fixed $t\in \overline{\mathbb{F}}_q$.
\end{proposition}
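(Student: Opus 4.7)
My plan is to mimic the setup of Proposition~\ref{Prop:Only2} but exploit the fact that demanding a root of multiplicity at least three at a single point (rather than two distinct multiple roots) yields a single plane curve in the $(X,Z)$-plane instead of a pair of surfaces in $\mathbb{P}^3$. A point $x$ is a root of $f(X+a)-cf(X)-t$ of multiplicity $\geq 3$ iff
\[
f'(x+a)-cf'(x)=0 \quad\text{and}\quad f''(x+a)-cf''(x)=0
\]
(with $t$ then forced). Eliminating $c$ gives the single condition $\Psi(x,a)=0$, where
\[
\Psi(X,Z):=f'(X+Z)f''(X)-f''(X+Z)f'(X)\in\mathbb{F}_q[X,Z].
\]
Since $\Psi(X,0)\equiv 0$, I factor $\Psi(X,Z)=Z\cdot\Phi(X,Z)$ for some $\Phi\in\mathbb{F}_q[X,Z]$.

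The central technical step is to show that, as a polynomial in $X$, $\Phi(X,Z)$ has degree exactly $2d-4$ with leading coefficient the nonzero constant $d^{2}(d-1)\alpha_d^{2}$, where $\alpha_d$ denotes the leading coefficient of $f$. Two verifications are needed. First, the $X^{2d-3}$ contributions from the two products in $\Psi$ each equal $d^{2}(d-1)\alpha_d^{2}X^{2d-3}$ and cancel, giving $\deg_X\Psi\leq 2d-4$. Second, after expanding one degree lower and using only the top two coefficients $\alpha_d,\alpha_{d-1}$ of $f$, the mixed $\alpha_d\alpha_{d-1}$ terms cancel symmetrically and what remains is
\[
[X^{2d-4}]\Psi \;=\; d^{2}(d-1)\alpha_d^{2}\,Z,
\]
which is nonzero under the hypothesis $p\nmid d(d-1)$. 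Dividing by $Z$ yields the desired leading coefficient of $\Phi(X,Z)$, and hence $\Phi(X,a)$ has at most $2d-4$ roots in $\overline{\mathbb{F}}_q$ for every $a\in\overline{\mathbb{F}}_q$.

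I would then argue that a single well-chosen $a_0\in\mathbb{F}_q^{*}$ already forces $|\Theta'_c|\leq 2d-4$. For $a\in\mathbb{F}_q^{*}$ and $x\in\overline{\mathbb{F}}_q$ with $f'(x)\neq 0$, the pair $(x,a)$ realizes a multiplicity-$\geq 3$ root for some $c$ iff $\Phi(x,a)=0$, and then $c=f'(x+a)/f'(x)$ is uniquely determined. The only failure mode of the condition $f'(x)\neq 0$ is $f'(x)=f'(x+a)=0$, which forces $a$ to be the difference of two distinct roots of $f'$; this exceptional set of $a$ has cardinality at most $(d-1)(d-2)$. The hypothesis $q>(d-1)^{2}$ guarantees the existence of an $a_0\in\mathbb{F}_q^{*}$ outside this exceptional set. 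Defining
\[
\Theta'_c := \{\,f'(x+a_0)/f'(x)\,:\,x\in\overline{\mathbb{F}}_q,\ \Phi(x,a_0)=0\,\}\cap \mathbb{F}_q,
\]
I obtain $|\Theta'_c|\leq 2d-4$, and for every $c\in\mathbb{F}_q\setminus\Theta'_c$ the choice $a_c:=a_0$ witnesses the conclusion.

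The hard part is the exact leading-coefficient computation for $\Phi(X,Z)$: one must verify both the degree drop to $2d-4$ (automatic from cancellation of the top $X^{2d-3}$ terms) and, crucially, the non-vanishing of the new $X^{2d-4}$-leading coefficient, where the hypothesis $p\nmid d(d-1)$ is used essentially. Once this is in hand, the remaining argument is largely bookkeeping, and the bound $2d-4$ (rather than something like $(d-1)(d-2)$ coming from a generic B\'ezout-type estimate) reflects precisely the $X$-degree of $\Phi(X,a_0)$.
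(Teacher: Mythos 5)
Your proposal follows essentially the same route as the paper's proof: eliminate $c$ from the two derivative conditions to get the single polynomial $f'(X+Z)f''(X)-f''(X+Z)f'(X)$, divide out the trivial factor (the paper divides by $a$, you by $Z$), verify via the top-two-coefficient computation that the quotient has $X$-degree exactly $2d-4$ with nonzero leading coefficient under $p\nmid d(d-1)$, and choose $a_0$ so that $f'(x)$ and $f'(x+a_0)$ share no root so that each of the at most $2d-4$ roots determines a unique $c=f'(x+a_0)/f'(x)$. The only differences are cosmetic (general leading coefficient $\alpha_d$, and bounding the exceptional set of $a$ by $(d-1)(d-2)$ rather than the paper's $(d-1)^2$), so the argument is correct and matches the paper.
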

\begin{proof}
Consider the system 
$$
\begin{cases}
f(x+a)-cf(x)=t\\
f^{\prime}(x+a)-cf^{\prime}(x)=0\\
f^{\prime\prime}(x+a)-cf^{\prime\prime}(x)=0\\
a\neq 0.
\end{cases}
$$
Any solution $(\overline{x},\overline{a},\overline{t},\overline{c})$ provides values $\overline{a},\overline{t},\overline{c}$ such that 
$f(x+\overline{a})-\overline{c}f(x)=\overline{t}$ has a root of multiplicity at least three. 

The system above is equivalent to 
\begin{equation}\label{Sist:2}
\begin{cases}
f(x+a)-cf(x)=t\\
f^{\prime}(x+a)-cf^{\prime}(x)=0\\
a\neq 0\\
\frac{f^{\prime}(x)f^{\prime\prime}(x+a)-f^{\prime}(x+a)f^{\prime\prime}(x)}{a}=0.\\
\end{cases}
\end{equation}

The last equation in System \eqref{Sist:2} is a non-vanishing equation of degree $2d-4$ for any $a$. To see this, let $f(x)=x^d+\alpha x^{d-1}+\cdots $ and thus  
\begin{eqnarray*}
f^{\prime}(x)&=&d x^{d-1}+\alpha(d-1) x^{d-2}+\cdots,\\
f^{\prime\prime }(x)&=&d (d-1)x^{d-2}+\alpha(d-1)(d-2) x^{d-3}+\cdots,
\end{eqnarray*}
and the leading coefficient  of 
$$f^{\prime}(x)f^{\prime\prime}(x+a)-f^{\prime}(x+a)f^{\prime\prime}(x) $$ 
is $-ad^2(d-1)\neq 0$.

Note that $f^{\prime}(x+a)-cf^{\prime}(x)$ and $f^{\prime\prime}(x)$ are non-vanishing polynomials. The number of $a$ for which $f^{\prime}(x+a)$ and $f^{\prime}(x)$ share a factor is upperbounded by $(\deg(f^{\prime}))^2\leq (d-1)^2$. Let $\overline{a}\in \mathbb{F}_q^*$ be such that $f^{\prime}(x+a)$ and $f^{\prime}(x)$ do not share any factor.

For this fixed $\overline{a}$, System \eqref{Sist:2} admits at most $2d-4$ solutions $(\overline{x},\overline{a},\overline{t},\overline{c})$ and the claim follows.

\end{proof}

\begin{proposition}\label{Prop:No3_2}
Let $q=2^h>\max\{2d-4,(d-1)^2\}$ and $f(x) \in \mathbb{F}_q[x]\setminus \mathbb{F}_q[x^2]$ not a monomial. 
Suppose that both the Hasse derivatives $f^{\prime}$ and $f^{\prime\prime}$ do not vanish.
There exists a set $\Theta^{\prime}_c\subset \mathbb{F}_q$ of size at most $2d-4$ such that for any $c \in \mathbb{F}_q\setminus \Theta^{\prime}_c$ there exists at least an $a_c\in \mathbb{F}_q^*$ such that 
$f(x+a_c)-cf(x)=t$ has no root in $\overline{\mathbb F}_q$ of multiplicity larger than $2$ for any fixed $t\in \overline{\mathbb F}_q$.
\end{proposition}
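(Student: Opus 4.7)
The plan is to mirror Proposition~\ref{Prop:No3}, using the Hasse derivatives $f^{\prime}$ and $f^{\prime\prime}$ in place of the ordinary ones. First I would set up the same system
\[
f(x+a)-cf(x)=t,\qquad f^{\prime}(x+a)-cf^{\prime}(x)=0,\qquad f^{\prime\prime}(x+a)-cf^{\prime\prime}(x)=0,\qquad a\neq 0,
\]
whose solutions $(\bar x,\bar a,\bar c,\bar t)$ parametrise exactly the situations in which $f(x+\bar a)-\bar c f(x)=\bar t$ has a root of multiplicity at least $3$ at $\bar x$. Eliminating $c$ between the second and the third equations produces the auxiliary polynomial identity $h(x,a)=0$, where
\[
h(x,a):=f^{\prime}(x)f^{\prime\prime}(x+a)-f^{\prime}(x+a)f^{\prime\prime}(x).
\]

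The main obstacle --- and the step where the characteristic $2$ setting departs from Proposition~\ref{Prop:No3} --- is to show that $h$ is not the zero polynomial in $(x,a)$. In odd characteristic an explicit leading-coefficient computation yields $-ad^2(d-1)$, which is nonzero thanks to $p\nmid d(d-1)$; in characteristic $2$ this vanishes identically. I would instead argue by contradiction: if $h\equiv 0$, then $\phi(x):=f^{\prime\prime}(x)/f^{\prime}(x)$ satisfies $\phi(x+a)=\phi(x)$ as a rational function for every $a$, hence is a constant $c_0\in\overline{\mathbb F}_q$, giving $f^{\prime\prime}=c_0 f^{\prime}$ with $c_0\neq 0$ (because $f^{\prime\prime}\neq 0$). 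In characteristic $2$, $f^{\prime}\in\overline{\mathbb F}_q[x^2]$ (its exponents are even, arising from odd-degree monomials of $f$), while $f^{\prime\prime}$ also carries monomials of exponent $\equiv 1\pmod 4$ (arising from degree-$\equiv 3\pmod 4$ monomials of $f$). Matching coefficients in $f^{\prime\prime}=c_0 f^{\prime}$ forces $c_n=0$ for every $n\equiv 3\pmod 4$ together with $c_{4k+2}=c_0 c_{4k+1}$ for every $k\ge 0$; comparing these identities at the leading coefficient $c_d=1$ (and using the odd-degree regime of $d$ that governs the application of this proposition in Theorem~\ref{thm:main}) forces $c_0=0$, the desired contradiction. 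Hence $h\not\equiv 0$.

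Once $h\not\equiv 0$ is secured, the remainder is analogous to Proposition~\ref{Prop:No3}. Writing $h(x,a)=\sum_i p_i(a)\,x^i$ with $p_M\not\equiv 0$ for some $M\le 2d-4$, the set of $a\in\mathbb F_q^*$ that either annihilate $p_M$ or make $f^{\prime}(x+a)$ and $f^{\prime}(x)$ share a common factor is finite, with cardinality controlled by the hypothesis $q>\max\{2d-4,(d-1)^2\}$; I would choose $\bar a$ outside this set. The polynomial $h(x,\bar a)$ then has degree at most $2d-4$ in $x$ and therefore at most $2d-4$ roots in $\overline{\mathbb F}_q$. Each such root $\bar x$ determines at most one bad value $\bar c=f^{\prime}(\bar x+\bar a)/f^{\prime}(\bar x)$ (the denominator being nonzero thanks to the $\gcd$ condition), so the set $\Theta^{\prime}_c$ of bad $c$'s has cardinality at most $2d-4$, and every $c\in\mathbb F_q\setminus\Theta^{\prime}_c$ admits $a_c=\bar a$ as required.
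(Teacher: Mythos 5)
Your proposal is correct in substance and reaches the conclusion by a genuinely different route at the one point where this proposition differs from Proposition~\ref{Prop:No3}, namely the non-vanishing of $h(x,a)=f^{\prime}(x)f^{\prime\prime}(x+a)-f^{\prime}(x+a)f^{\prime\prime}(x)$. The paper handles this by a direct leading-coefficient computation: it writes $f^{\prime}(x)=\alpha_r x^r+\cdots$ and $f^{\prime\prime}(x)=\alpha_s x^s+\cdots$, where $r+1$ is the largest odd degree occurring in $f$ and $s+2$ the largest degree $\equiv 3\pmod 4$, and reads off the coefficient $a\alpha_s\alpha_r\neq 0$ of $x^{r+s-1}$. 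You instead assume $h\equiv 0$, deduce $f^{\prime\prime}=c_0f^{\prime}$ (this is immediate after the substitution $y=x+a$, which turns $h\equiv 0$ into $f^{\prime}(x)f^{\prime\prime}(y)=f^{\prime}(y)f^{\prime\prime}(x)$; the translation-invariance detour is not needed), and rule this out by comparing exponent supports modulo $4$. Both routes are legitimate; yours is structurally cleaner and, importantly, makes visible exactly which hypothesis is doing the work.

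That visibility matters, because the extra input you invoke --- that $d$ is odd, imported from the ambient Theorem~\ref{thm:main} --- is not among the stated hypotheses of the proposition, and it cannot be dispensed with. For $f(x)=x^5+\lambda x^6$ with $\lambda\neq 0$ one has $f^{\prime}=x^4$ and $f^{\prime\prime}=\lambda x^4$, so every stated hypothesis holds ($f\notin\mathbb{F}_q[x^2]$, not a monomial, both Hasse derivatives nonzero) and yet $h\equiv 0$; indeed every double root of $f(x+a)-cf(x)-t$ is then automatically a triple root, so the conclusion genuinely fails. The paper's proof conceals an equivalent assumption: it tacitly requires a monomial of degree $\equiv 3\pmod 4$ for $s$ to exist (only a sufficient condition for $f^{\prime\prime}\neq 0$, as the paper's own remark acknowledges) and treats $\alpha_s x^s$ as the leading term of $f^{\prime\prime}$, which it need not be. So your restriction to odd $d$ is not a defect of your method but a correction the statement actually needs; with it, your contradiction is immediate (for $d$ odd one even has $\deg f^{\prime}=d-1>d-2\geq\deg f^{\prime\prime}$, so $f^{\prime\prime}=c_0f^{\prime}$ forces $c_0=0$), and your remaining bookkeeping over admissible $\overline{a}$ and the bound $\#\Theta^{\prime}_c\leq 2d-4$ matches the paper's.
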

\begin{proof}
 The argument is the same as in the proof of Proposition \ref{Prop:No3}. Now 
 \begin{eqnarray*}
 f^{\prime}(x)&=&\alpha_r x^r+\alpha x^{r-2}+\cdots,\\
 f^{\prime\prime}(x)&=&\alpha_s x^s+\beta x^{s-1}+\cdots,
 \end{eqnarray*}
 where $r+1$ and $s+2$ are the largest degrees not divisible by $2$ and  equivalent to $3\pmod{4}$, respectively. 
So, the  leading coefficient of
$$f^{\prime}(x)f^{\prime\prime}(x+a)-f^{\prime}(x+a)f^{\prime\prime}(x) $$ 
is $a\alpha_s\alpha_r\neq 0$ and the claim follows.

\end{proof}

Combining Propositions \ref{Prop:Only2},  \ref{Prop:No3}, and \ref{Prop:No3_2} we have the following.
\begin{proposition}\label{Prop:finale}
Let $q>(2d-1)(2d-3)$, $q=p^n$, and $f(x) \in \mathbb{F}_q[x]\setminus \mathbb{F}_q[x^p]$ not a monomial, $d=\deg(f)$, with 
\begin{enumerate}
    \item $p\nmid d(d-1)$ if $p$ is odd;
    \item the Hasse derivatives $f^{\prime}$ and $f^{\prime\prime}$ do not vanish if $p=2$.
\end{enumerate}
There exists a set $\Psi_c\subset \mathbb{F}_q$ of size at most $(2d-1)(2d-3)$ such that for any $c \in \mathbb{F}_q\setminus \Psi_c$ there exists at least an $a_c\in \mathbb{F}_q^*$ such that 
$f(x+a_c)-cf(x)=t$ has either all distinct roots or precisely one double root in $\overline{\mathbb{F}}_q$ for any fixed $t\in \overline{\mathbb{F}}_q$.
\end{proposition}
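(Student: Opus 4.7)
The plan is to combine Propositions \ref{Prop:Only2}, \ref{Prop:No3}, and \ref{Prop:No3_2} by exhibiting a single value $\overline{a} \in \mathbb{F}_q^*$ that simultaneously satisfies the genericity assumptions underlying all three. Concretely, from the proof of Proposition \ref{Prop:Only2} one extracts a ``bad'' set $\mathcal{A}_1 \subset \mathbb{F}_q$ of size at most $(2d-2)(2d-3)$, namely those $a$ for which the plane $Z = a$ cuts $\mathcal{W}_1 \cap \mathcal{W}_2$ in positive dimension. From the proofs of Propositions \ref{Prop:No3} and \ref{Prop:No3_2} one extracts a second bad set $\mathcal{A}_2 \subset \mathbb{F}_q$ of size at most $(d-1)^2$, given by the roots in $a$ of the resultant $\mathrm{Res}_x(f'(x+a), f'(x))$.

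I would then pick $\overline{a} \in \mathbb{F}_q^* \setminus (\mathcal{A}_1 \cup \mathcal{A}_2)$, using the hypothesis $q > (2d-1)(2d-3)$ together with the observation that $\mathcal{A}_2$ automatically contains $a = 0$ with multiplicity at least $d-1$ in the resultant, so the effective count of bad elements of $\mathbb{F}_q^*$ is tighter than the crude sum $(2d-2)(2d-3) + (d-1)^2$. For this fixed $\overline{a}$, applying Proposition \ref{Prop:Only2} produces a set $\Theta_c \subset \mathbb{F}_q$ of size at most $(2d-2)(2d-3)$ such that for $c \notin \Theta_c$ the polynomial $f(x+\overline{a}) - cf(x) - t$ has at most one multiple root in $\overline{\mathbb{F}}_q$ for every $t$. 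Applying Proposition \ref{Prop:No3} (in odd characteristic) or \ref{Prop:No3_2} (in characteristic $2$) with the same $\overline{a}$ produces a set $\Theta_c' \subset \mathbb{F}_q$ of size at most $2d - 4$ such that for $c \notin \Theta_c'$, no root of $f(x+\overline{a}) - cf(x) - t$ has multiplicity greater than two for any $t$. Setting $\Psi_c := \Theta_c \cup \Theta_c'$ yields $|\Psi_c| \leq (2d-2)(2d-3) + (2d-4) < (2d-1)(2d-3)$, as required.

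Finally, for any $c \in \mathbb{F}_q \setminus \Psi_c$ I would set $a_c := \overline{a}$: the two conclusions above force $f(x+a_c) - cf(x) = t$ to have either all simple roots or exactly one double root (with all remaining roots simple), which establishes the proposition. The main obstacle is the counting argument guaranteeing the common $\overline{a}$ in the second step; this requires a careful analysis of the multiplicity of the resultant at $a=0$ (and a corresponding control on whether $a=0$ lies in $\mathcal{A}_1$) in order to match the hypothesis $q > (2d-1)(2d-3)$ stated in the proposition, rather than the looser $q > (2d-2)(2d-3) + (d-1)^2$ that a naïve union bound would yield.
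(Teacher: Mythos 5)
Your strategy is the paper's own: the paper gives no proof of Proposition~\ref{Prop:finale} beyond the sentence preceding it (``Combining Propositions \ref{Prop:Only2}, \ref{Prop:No3}, and \ref{Prop:No3_2}\dots''), and the intended combination is exactly what you describe. You are also right, and more careful than the paper, about the one genuinely delicate point: the three propositions as stated each produce their own $a_c$, so one must re-enter their proofs and choose a single $\overline{a}\in\mathbb{F}_q^*$ avoiding both bad sets --- the $a$ for which the plane $Z=a$ contains a component of $\mathcal{W}_1\cap\mathcal{W}_2$, and the $a$ for which $f'(x+a)$ and $f'(x)$ share a factor. Once such an $\overline{a}$ exists, your conclusion is correct: $\Psi_c=\Theta_c\cup\Theta_c'$ has size at most $(2d-2)(2d-3)+(2d-4)=4d^2-8d+2<(2d-1)(2d-3)$, and for $c\notin\Psi_c$ the polynomial $f(x+\overline{a})-cf(x)-t$ has at most one multiple root and no root of multiplicity at least $3$, hence all roots simple or exactly one double root.

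The unresolved issue is precisely the one you flag at the end, and your proposed repair does not close it. Discarding $a=0$, which is indeed a root of $\mathrm{Res}_x\bigl(f'(x+a),f'(x)\bigr)$ of multiplicity at least $d-1$, still leaves up to $(d-1)(d-2)$ nonzero bad values coming from the resultant, and $(2d-2)(2d-3)+(d-1)(d-2)=5d^2-13d+8$ exceeds $(2d-1)(2d-3)=4d^2-8d+3$ for every $d\ge 4$; even with the sharper degree bound $\deg(\mathcal{W}_1\cap\mathcal{W}_2)\le(2d-3)(2d-4)$ the union bound fails for $d\ge 8$. So the existence of a common $\overline{a}$ is only guaranteed under a hypothesis of roughly $q>5d^2$, not under $q>(2d-1)(2d-3)$ as stated. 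To be fair, this discrepancy is inherited from the paper's own statement rather than introduced by you --- the paper performs no such bookkeeping at all --- and it is harmless for Theorem~\ref{thm:main}, where only the $q$-independence of the constants matters; but a complete write-up must either strengthen the lower bound on $q$ in Proposition~\ref{Prop:finale} (to $q>(2d-2)(2d-3)+(d-1)^2$, say) or supply a genuinely finer count than the one you sketch.
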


We are now ready to prove the main theorem of this section.

\begin{proof}[Proof of Theorem \ref{thm:main}]
Using Corollary \ref{cor:geomtoarith}  it is enough to prove that the number of pairs $(a,c)\in \mathbb{F}_q^*\times \mathbb F_q$ such that the geometric Galois group of $F=f(x+a)- cf(x)-t\in \mathbb F_q(t)[X]$ is not the symmetric group $\mathcal S_n$, is $O(1)$.

Observe that thanks to Lemma \ref{lemma:orbits} combined with Proposition \ref{Prop:finale} we have that the inertia groups of the Galois group of $F$ are all isomorphic to $C_2$ and generated by single transpositions. Since $G$ is generated by its inertia groups thanks to Lemma \ref{lemma:genbyinertia},   $G=\textrm{Gal}(F\mid \overline{\mathbb F}_q)$ is a transitive subgroup of $S_n$ generated by transpositions (since $F$ is irreducible for any fixed pair $(a,c)$). 
Proposition \ref{prop:gensn} now implies that $G=S_n$.
Therefore we obtain that $G=\textrm{Gal}(F\mid \mathbb F_q)=S_n$ as well, from which the claim follows thanks to Corollary \ref{cor:geomtoarith}.

%------------------------- separate this fro future reference in an additional lemma
%
%
%The inertia groups generate $G$, so $G$ is a transitive group generated by transpositions, we now prove that $G\cong S_n$. Let $V$ be a graph constructed as follows: the vertices $V$ are all the transpositions of $G$, and two vertices are connected if they the associated transpositions act on a common root. If this graph is disconnected the group cannot be transitive, so $V$ has to be connected. If we can prove that all the transpositions of the form $(1,j)$ are in $G$ we are done. Let $(1,j)$ be a transposition that is not in $G$. This forces automatically $(1,j)\not\in V$. Let $(1,i)\in V$ for some $i$ and let $\ell$ be a path to a transposition having $j$ in its support. Notice that conjugating $(1,j)$ by $(j,k)$ with $k\neq 1$ gives $(1,k)$ so the path $\ell$ from $(1,j)$ to $(i,\ell)$ provides an element $g$ (equal to the product of the conjugations in the path) such that $g(1,j)g^{-1}=(1,i)$.

%-----------------------------------------------------------------

\end{proof}

\section{The feasibility of differential attacks based on the c-differential uniformity} \label{sc:2}

The focus in the research on $c$-differential uniformity has so far been almost purely on determining the $c$-differential uniformity of specific functions. The actual use case has been mostly neglected, which is surprising given that a clear attack based on bad $c$-differential uniformities has not been presented yet. 

While the "standard" differential uniformity measures the probability of a difference propagating through the S-box (indeed $DDT_F[a,b]$ is indeed the probability of a difference $a$ turning into a difference $b$ times $2^n$), it is not immediately clear what kind of statistical bias the $c$-differential uniformity measures. Clearly, the distribution of the values of $F(x+a)-cF(x)$ is also a measure of differential bias (since it compares the inputs of $x$ and $x+a$), but the output is for $c \neq 1$ not a usual difference itself, so the construction of a differential trail that tracks the propagation through several rounds of the cipher is not readily possible. The $c$-differential uniformity thus does not measure a propagation of usual differences. 

Unlike the multiplicative differentials~\cite{mult} mentioned as inspiration for the $c$-differential uniformity in~\cite{cdiff}, the $c$-differential uniformity also does not measure the propagation of multiplicative "differences" of the form $(x,\alpha x)$ through the cipher since, as mentioned, the input difference used is actually the regular addition. It is however possible to find a kind of "difference" such that the $c$-differential uniformity does measure the propagation of this "difference", as we present now. 

\subsection{A general differential attack}
Instead of using the usual difference $a-b$ (which, in the case of the usual $\F_2^n$ setting, boils down to the XOR $a \oplus b$), other binary operation can of course be used, for instance in the case of the multiplicative differentials from~\cite{mult} mentioned above, this is the modular multiplication. So let $\circ \colon \F_p^n \times \F_p^n \rightarrow \F_p^n$ be a binary operation. We can then consider the propagation of pairs of those generalized differences $(x,x \circ a)$ and, identical to the usual differential attack, we can attempt to use biases in the probabilities of the propagation of those differences. 

Generalizing the usual differential uniformity of a function $F \colon \F_{p^n} \rightarrow \F_{p^n}$, we can define the $\circ$-differential uniformity and the $\circ$-DDT. 
\begin{definition} \label{def:general}
	Let $F \colon \F_{p^n} \rightarrow \F_{p^n}$ be a function. We define for all $a,b \in \F_{p^n}$
	\[_\circ\DDT_F[a,b]=\#\{x \in \F_{p^n}\colon F(x \circ a)=b \circ F(x)\}\]
	and 
	\[_\circ\delta_F={\max_{x \circ a \neq x, b \in {\F_{p^n}}}} {_\circ\DDT_F[a,b]}.\]
\end{definition}
Clearly, for $\circ=+$ one recovers the usual DDT and differential uniformity. But, more interestingly, this framework actually also allows us to recover the $c$-differential uniformity. Indeed, setting $a \circ_c b := a+cb$ for all $a,b \in \F_{p^n}$ and a fixed $c \in \F_{p^n}$, we get
\[F(x \circ_c a)=b \circ_c F(x) \Leftrightarrow F(x+ca)=b+cF(x)\Leftrightarrow F(x+ca)-cF(x)=b.\]
Since $c$ is fixed, a simple transformation $a \mapsto a/c$ then immediately relates the $c$-DDT with the $\circ$-DDT for this choice of $\circ$, and the respective differential uniformities are identical. In this sense, the $c$-differential uniformity is just a new special case of differential uniformity for this specific choice of the binary operation $\circ$. The $c$-differential uniformity thus seems to be just a tool to measure the resistance of a cipher against a specific differential attack based on this operation. We want to note that the general form of differential as described in Definition~\ref{def:general} was analysed in a series of papers~\cite{genop,genop2} with the idea to find specific binary operations that can lead to efficient differential attacks (or, possibly, for a malicious designer, to a "hidden", non-public binary operation that serves as a trapdoor to a cipher resistant against the usual attacks). However, the authors in those papers only analyse a subclass of binary operations that excludes the specific binary operation $\circ$ that we identified as being related to the $c$-differential uniformity here. 

\subsection{The differential attack based on weak $c$-differential uniformity}

Let us now consider a potential differential attack using this specific binary operation $\circ_c$ defined via $a\circ_c b = a+cb$ for all $a,b \in \F_{p^n}$ and $c \in \F_{p^n}^*$. As explained briefly in the introduction on the classic differential attack earlier,  the differential attack is possible for many block ciphers since they use simple key addition as a primitive operation, which means that differences propagate in the same way regardless of the key. Moreover, differences propagate also unchanged through the linear layer. We now check if/when those two properties are satisfied by $\circ_c$.

We start with a consideration of the linear layer. The following result states that a linear layer applied to the input of the function does not change its $c$-differential uniformity, however a linear layer applied to the output generally does.
\begin{theorem} \label{thm:linear}
	Let $F\colon \F_{p^n} \rightarrow \F_{p^n}$ be a function and $A \colon \F_{p^n} \rightarrow \F_{p^n}$ be an $\F_p$-affine permutation, where $A=L+s$ and $L$ is the linear part of $A$. Then 
	\[_c\DDT_F[a,b]={_c\DDT_{F\circ A}[L(a),b]}\]
	and in particular
	\[_c\delta_{F} = {_c\delta_{F\circ A}}.\]
	
	Moreover, if $A$ is affine over $\F_{p^l}$ where $l=[\F_p(c)\colon \F_p]$, then 
	\[_c\DDT_F[a,b]= {_c\DDT_{A\circ F}[a,L^{-1}(b-(1-c)s)]}\] and 
	\[_c\delta_{F} = {_c\delta_{A\circ F}}.\]
	However, generally $_c\delta_{F} \neq {_c\delta_{A\circ F}}$ if $c\notin \F_p$.
\end{theorem}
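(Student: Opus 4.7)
The plan is to prove the three assertions in turn; all rely on the identity $A(x+a)=A(x)+L(a)$, which follows from $A=L+s$ being $\F_p$-affine. For the first identity, I would simply unravel the definition of the $c$-DDT. Writing
\[{_c\DDT}_{F\circ A}[a',b]=\#\{x:F(A(x+a'))-cF(A(x))=b\}=\#\{x:F(A(x)+L(a'))-cF(A(x))=b\},\]
I substitute $y=A(x)$, which is a bijection because $A$ is a permutation, to obtain ${_c\DDT}_{F\circ A}[a',b]={_c\DDT_F[L(a'),b]}$. This exhibits the claimed bijection (parametrized by $L$) between entries of the two $c$-DDTs; since $L$ permutes $\F_{p^n}$, taking the maximum over all $(a,b)$ yields ${_c\delta_F}={_c\delta_{F\circ A}}$.

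For the second identity, the key computation is
\[(A\circ F)(x+a)-c(A\circ F)(x)=L(F(x+a))+s-cL(F(x))-cs=L(F(x+a))-cL(F(x))+(1-c)s.\]
The crucial observation is that when $L$ is affine over $\F_{p^l}$ with $l=[\F_p(c):\F_p]$, the element $c$ lies in the field over which $L$ is linear, so $cL(y)=L(cy)$ for every $y\in\F_{p^n}$. This allows the right-hand side to be rewritten as $L(F(x+a)-cF(x))+(1-c)s$. Equating to $b$ and using the invertibility of $L$ gives $F(x+a)-cF(x)=L^{-1}(b-(1-c)s)$, which establishes the claimed bijection on DDT entries and hence ${_c\delta_F}={_c\delta_{A\circ F}}$.

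For the final negative assertion, I would exhibit a concrete counterexample witnessing the necessity of the $\F_{p^l}$-linearity assumption. A natural candidate for $L$ is a Frobenius power $x\mapsto x^{p^j}$ with $1\le j<n$, which is always $\F_p$-linear but fails to be $\F_p(c)$-linear whenever $c\notin\F_{p^j}$; this is precisely where the commutation $cL(y)=L(cy)$ used above breaks down. I would then fix a specific small field, for instance $\F_{2^4}$ with $L$ the Frobenius squaring and $c$ chosen outside $\F_{2^2}$, together with a specific low-degree polynomial $F$, and verify by tabulating both $c$-DDTs that the resulting $c$-differential uniformities differ. The main obstacle here is essentially experimental: the algebra of the first two parts is routine once the factorization $cL(y)=L(cy)$ is spotted, but locating a clean, small counterexample requires either a short computer search or a hand-tailored choice of $F$ that interacts nontrivially with $L$.
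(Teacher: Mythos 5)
Your proposal is correct and takes essentially the same route as the paper: substitute $y=A(x)$ for the input-side identity, use the commutation $L(cy)=cL(y)$ (which the paper additionally shows is \emph{equivalent} to $L$ being $\F_{p^l}$-linear by comparing coefficients of the linearized polynomial) for the output-side identity, and appeal to a computer-found counterexample for the final negative claim, exactly as the paper does.
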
 
\begin{proof}
	Clearly, $(F\circ A)(x+a)-c(F\circ A)(x)=b$ if and only if $F(L(x)+L(a)+s)-cF(L(x)+s)=b$, and the first results follows since $L$ is a permutation.\\
	
	On the other hand, $(A\circ F)(x+a)-c(A\circ F)(x)=b$ if and only if 
	\[F(x+a)+L^{-1}(s)-L^{-1}(cL(F(x))+cs)=F(x+a)-L^{-1}(cL(F(x)))+L^{-1}((1-c)s)=L^{-1}(b).\]
	If $L(cx)=cL(x)$ for all $x$, then $L^{-1}(cL(x))=cx$ and $_c\DDT_F[a,b]={_c\DDT_{A\circ F}[a,L^{-1}(b-(1-c)s)]}$. We check when $L(cx)=cL(x)$  occurs. Writing $L$ as a polynomial $L=\sum_{i=0}^{n-1} a_ix^{p^i}$, we get
	\begin{align*}
		L(cx)&=\sum_{i=0}^{n-1} c^{p^i}a_ix^{p^i}\\
		cL(x)&=\sum_{i=0}^{n-1} ca_ix^{p^i}.
	\end{align*}
	Comparing coefficients yields $c^{p^i}=c$ for all $i$ with $a_i\neq 0$. $c^{p^i}=c$ is equivalent to $c \in \F_{p^i}$ or $i=0$, so we conclude that $L(cx)=cL(x)$ for all $x \in \F_{p^n}$ if and only if $a_i=0$ unless $i=0$ or $c \in \F_{p^i}$ which implies that $a_i=0$ unless $i=0$ or $l|i$. This shows that $L$ is linear over $\F_{p^l}$. \\
	
	If this condition does not hold, it is easy to construct examples by computer search that show $_c\delta_{F} \neq _c\delta_{A\circ F}$.

\end{proof}

Theorem~\ref{thm:linear} shows that unless one picks very specific linear layers or $c \in \F_p$ (which in the $\charac 2$ case most interesting for applications only holds for $c=1$, i.e. the classical differential uniformity), the $c$-differential uniformity is actually affected by the choice of the linear layer. In particular, unlike the classical differential attack, the resistance of the cipher cannot be broken down to the S-box level, since linear layers used in block ciphers are of course generally not linear over extension fields. \\

Let us now consider the interaction of the $\circ$-differences with the process of key addition. For the classical differences, the key addition does not impact any differences since 
\begin{equation}
(x+\Delta)+k-(x+k)=\Delta
\label{eq:differences_classical}
\end{equation}

 for any choice of $x,k,\Delta$. For our binary operation $\circ_c$, let $\overline{\circ_c} \colon \F_{p^n} \times \F_{p^n} \rightarrow \F_{p^n}$, defined as $a \overline{\circ_c} b := a-cb$, be the "inverse" of $\circ_c$ in the sense that $(a \circ_c b) \overline{\circ_c} b=a$ for all $a,b \in \F_{p^n}$. We now consider the differences with respect to $\circ_c$ by substituting $+$ and $-$ of the classical differences in Eq.~\eqref{eq:differences_classical} with $\circ_c$ and $\overline{\circ_c}$ (while of course keeping the regular addition for the key addition). This yields
\[((x\circ_c \Delta) + k)\overline{\circ_c}(x+k)=x+c\Delta+k-c(x+k)=c\Delta-(c-1)(x+k).\]
It is immediate to see that the differences now are neither independent from the subkey $k$ nor from the message $x$ if $c \neq 1$. \\

So, in closing, it seems that a differential attack based on $\circ_c$-differences (attempting to abuse high $c$-differential uniformity) has several practical challenges as both the linear layers and the key addition process used in the vast majority of block ciphers make such an attack considerably harder than an attack relying on the classical concept of differences. For block ciphers that do not use a simple key addition but possibly another primitive operation, differential attacks based on different binary operations as lined out in this section might be of practical interest. \\

Regardless, the $c$-differential uniformities still measure biases in the distribution of differences and it might still theoretically be possible to construct an attack different than the one considered here to abuse this bias. However, it seems to be clear that an analysis would be made considerably more difficult by the fact that linear layers play a more significant role than in the classical differential attack and its derivatives.

%boomerang?

\section{Conclusions and open problems}

Our main contribution concerns the $c$-differential uniformity of polynomials and states that for a generic polynomial  there exists only a thin set of instances of  $c\in \mathbb{F}_q$ for which  $_c\delta_F$ is not the worst possible.   
Our investigation involves techniques from both Algebraic Geometry in positive characteristic and Galois Theory and tells us that in order to avoid the differential biases encoded in the $c$-differential uniformity, polynomials need to respect specific constraints on their degree structure. 

More generally, we show that extreme statistical differential biases for a big class of functions are inevitable. While our analysis in Section~\ref{sc:2} indicates that constructing attacks based on those biases is not easy, it remains open if other attacks exploiting $c$-differential uniformities are possible. 

An obvious question is if it is possible to extend the techniques we used in this paper to analyze other statistical biases, in particular it would be interesting to see if results of the form of Theorem~\ref{thm:main} can be achieved.

An interesting theoretical open question concerns the possibility to obtain similar results involving weaker constraints, for instance dropping the condition $p\nmid \deg(f)$ in Theorem~\ref{thm:main}. 

In Section~\ref{sc:2}, we discussed a \emph{general} differential attack with an arbitrary binary operation replacing the XOR. In~\cite{genop}, the authors investigate a similar generalized attack (starting from a different motivation), and argue that for some specific binary operations, there are enough weak keys that allow the exploitation of certain biases. While the results are not applicable to the case of $c$-differential uniformities, it would be interesting if the operations investigated in~\cite{genop} behave similarly to  the $c$-differential uniformity as described in this paper.

\section{Acknowledgments}
\noindent 
The first author is supported by the Italian National Group for Algebraic and Geometric Structures and their Applications (GNSAGA - INdAM).
The second and third authors are supported by NSF grant 2127742.

 \bibliographystyle{IEEEtran}
 \bibliography{bib}

\begin{IEEEbiographynophoto}{Daniele Bartoli}

 was born in Molfetta, Italy, on September 22, 1985. He
received the degree in Mathematics from the University of Perugia, Italy,
in September 2008 and the Ph.D. degree in Mathematics and Computer
Science from the same University in 2012, with a dissertation from coding
theory. Currently, he is Associate Professor at the Department of
Mathematics and Computer Science, University of Perugia, Perugia, Italy.
His research interests are in coding theory, including quantum coding, and
Galois geometries.

\end{IEEEbiographynophoto}

% insert where needed to balance the two columns on the last page with
% biographies
%\newpage

\begin{IEEEbiographynophoto}{Lukas K\"olsch}
 reveived the Masters Degree from Otto-von-Guericke University Magdeburg in 2017 and the Ph.D. degree
in 2020 from University of Rostock, Germany under the supervision of Gohar Kyureghyan. He is currently a postdoctoral scholar at University of South Florida. He is interested in Boolean functions, symmetric cryptography, and Galois geometry.
\end{IEEEbiographynophoto}

\begin{IEEEbiographynophoto}{Giacomo Micheli}
 graduated at the University of Rome “La Sapienza” in July 2012. He completed his
Ph.D. with distinction at the Zurich Graduate School in Mathematics in October 2015 under the supervision
of Prof. Joachim Rosenthal. He is currently a tenure-track assistant professor at the University of South
Florida and Co-Director of the Center for Cryptographic Research at USF.
\end{IEEEbiographynophoto}

\end{document}